\newtheorem{theorem}{Theorem}[section]
\newtheorem{lemma}[theorem]{Lemma}
\newtheorem{definition}[theorem]{Definition}
\date{}
\begin{document}

\def\spacingset#1{\renewcommand{\baselinestretch}%
{#1}\small\normalsize} \spacingset{1}

%%%%%%%%%%%%%%%%%%%%%%%%%%%%%%%%%%%%%%%%%%%%%%%%%%%%%%%%%%%%%%%%%%%%%%%%%%%%%%

\title{\bf Modelling Correlated Bernoulli Data Part II: Inference}
\author[1]{Louise Kimpton}
\author[1]{Peter Challenor}
\author[2]{Henry Wynn}
\affil[1]{University of Exeter}
\affil[2]{London School of Economics}
  \maketitle

\bigskip
\begin{abstract}
Binary data are highly common in many applications, however it is usually modelled with the assumption that the data are independently and identically distributed.  This is typically not the case in many real-world examples and such the probability of a success can be dependent on the outcome successes of past events.  The de Bruijn process (DBP) was introduced in \cite{Kimpton2022}.  This is a correlated Bernoulli process which can be used to model binary data with known correlation.  The correlation structures are included through the use of de Bruijn graphs, giving an extension to Markov chains. Given the DBP and an observed sequence of binary data, we present a method of inference using Bayes' factors. Results are applied to the Oxford and Cambridge annual boat race.
\end{abstract}

\noindent%
{\it Keywords: De Bruijn Graph, binary, Bernoulli, correlation, Markov chains}

\section{Introduction}

Binary data are highly common in many applications including climate modelling, ecology and genomics. For example, ice sheet data can consist of spatial points on a grid, where the outcome at each location gives either the presence or absence of ice (denoted 1 or 0). With rising surface air temperatures in areas around the globe, ice sheet modelling \citep{Chang2015} is important for predicting future sea level rise due to the melting of ice.  Alternatively in mathematical ecology, there is a significant interest in the measurement of aggregation of a population's spatial pattern \citep{Pielou1984, Pielou1969}. Again, the data may consist of the presence or absence of the individuals in the population (1 or 0), and we may want to model clusters or patch sizes of trees in a forest, or estimate the maximum size for a collection of bacteria in an experiment. 

Typically, binary random variables are treated to be independently and identically distributed with a single probability of success.  However, the assumption that all variables are truly independent is often not sufficient in many applications.  Instead,  there is often a level of correlation between variables in a neighbourhood, such that the probability of a success is dependent on the outcome successes of past events (either spatial or temporal).  If modelling ice sheets or trees and bacteria on a one-dimensional line as above, there is a much higher chance of observing future ice, trees or bacteria when they have already been located in the same area.

Current methods of modelling correlated Bernoulli data include logistic regression and classification methods \citep{Hilbe2009, Kleinbaum1994, Chang2015} as well as generalised linear models \citep{Diggle1998}.  The main issues with these methods are that they only consider marginal distributions when drawing samples, forcing any model outputs to be conditionally independent. There have also been attempts in producing a correlated Bernoulli distribution \citep{Teugels1990, Society2017a, Society2017} as well as considering how to use a multivariate Bernoulli distribution to estimate the structure of graphs with binary nodes \cite{Dai2013}.  However, each of these approaches have to contend with high numbers of parameters. Similarly, there are other approaches involving graphs including \cite{Banerjee2008} and the Ising model, which uses undirected graphs to model groups of binary random variables with more than two interactions \citep{Cipra1987, Ravikumar2010}.

\cite{Kimpton2022} introduced a de Bruijn process. This is a novel framework for modelling and simulating correlated binary trials.  To capture the distance correlation between variables, de Bruijn graphs are used  \citep{Woude1946, Good1946, Golomb1967, Fredricksen1992}. Given the set of symbols, $V=\{0, 1\}$, the vertices or nodes of the graph consist of all the possible sequences of $V$.  These symbols are defined as the 'letters' of the de Bruijn graph, and the sequences of these letters are defined as de Bruijn 'words' of length $m$.  Two de Bruijn graphs of length $m=2$ and $m=3$ are given in Figure \ref{DBG23}.  The nodes consist of all the possible length two or three sequences of $0$’s and $1$’s where there are two edges coming in and out of each node to give a total of $2m$ nodes and $2m + 1$ edges.

\begin{figure}[ht]
\centering
\includegraphics[scale=1]{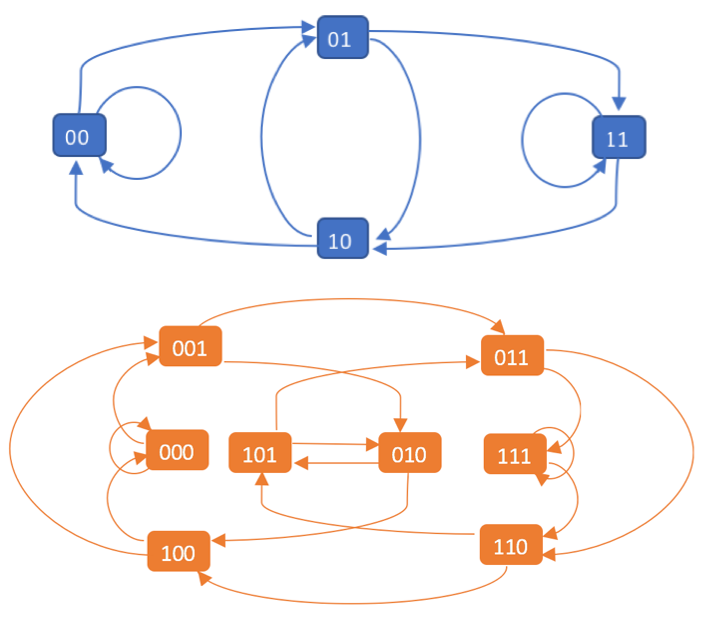}
\caption{Examples of length 2 and 3 de Bruijn graphs with two letters: 0 and 1.}
\label{DBG23}
\end{figure}

As described in \cite{Kimpton2022}, to each of the graph edges in Figure \ref{DBG23}, a probability can be assigned to give the probability of transitioning from each de Bruijn word to the next.  The notation $p_{i}^{j}$ is used to denote the probability of transitioning from the word $i$ to the word $j$. Let $X = \{X_{1}, X_{2}, \ldots, X_{n}\}$ be a length $n$ sequence of binary random variables such that $X_{i}$ is contained within the set of $s=2$ letters, $V = \{0,1\}$. The sequence $X$ can be written in terms of words of length $m$, $X = \{W_{1}, W_{2}, \ldots, W_{n-m+1}\}$, where each $W_{i}$ is an overlapping sequence of $m$ letters.  We thus formally define a de Bruijn process in Definition \ref{DBPdef}. 
\begin{definition}[de Bruijn Process (DBP)] \label{DBPdef}
\rm{Let random variable words $W_{1}, W_{2}, \ldots$ consist of length $m$ sequences of `letters' from the set $V=\{0,1\}$.  For any positive integer, $t$, and possible states (words), $i_{1}, i_{2}, \ldots, i_{t}$, $X$ is described by a stochastic process in the form of a Markov chain with,}
\begin{equation}
\begin{split}
P \left( W_{t} = i_{t} | W_{t-1} = i_{t-1}, W_{t-2} = i_{t-2}, \ldots, W_{1} = i_{1} \right) &= P \left( W_{t} = i_{t} | W_{t-1} = i_{t-1} \right) \\
&= p_{i_{t-1}}^{i_{t}},
\end{split}
\end{equation}
\rm{where $p_{i_{t-1}}^{i_{t}}$ is the probability of transitioning from the word $i_{t-1}$ to word $i_{t}$.}
\end{definition}

Since $X$ has a Markov property on the word and not the letter, far more structure can be incorporated into the sequence.  The authors further describe relationships regarding the stationary distribution of a binary de Bruijn graph, as well as a way of measuring the clustering of a sequence by analysing the run lengths of the letters.  A run length length $R$ was defined as the number of consecutive $1$'s (or $0$'s) in a row bounded by a 0 (or a 1) at both ends. The run length distribution then gives the probability of a run of length $n$ for any $n \in \mathbb{N}^{+}$ in terms of the word length $m$ and the transition probabilities.

There are two main aims of this paper to further develop the de Bruijn process. The first one of these is to fully define a correlated Bernoulli distribution using de Bruijn graphs. Given a sequence of binary random variables, $X = \{ X_{1}, X_{2}, \ldots, X_{n} \}$, we will give an expression for the joint probability of X:
\begin{equation}
\pi^{n}(x_{1} x_{2} \ldots x_{n}) = P(X_{1} = x_{1}, X_{2} = x_{2}, \ldots, X_{n} = x_{n}),
\end{equation}
in terms of the de Bruijn word length, marginal probabilities of the words and associated transition probabilities.  

The second aim of this paper is to develop a method of inference. Each process is fully defined through the length of the word $m$ at each node, and the corresponding transition probabilities. Therefore to predict the de Bruijn process an observed sequence was most likely generated from, we must estimate both of these parameters. Since $m$ can only take integer values, we have chosen to proceed by comparing models using Bayes' factors. The likelihood of the sequence in terms of the transition probabilities can be stated, and so we can further estimate the transition probabilities through maximum likelihood (or Bayesian equivalent).

The remainder of this paper is organised as follows.  Section \ref{CBD} defines the correlated Bernoulli distribution using de Bruijn graphs.  Section \ref{Inf} then gives details on our method of inference to estimate both de Bruijn word lengths and transition probabilities. This also includes examples. Section \ref{App} shows our worked application on the Oxford and Cambridge university boat race. Finally conclusions and future work are discussed in Section \ref{FW}.

\section{Correlated Bernoulli Distribution} \label{CBD}

Let $X = \{X_{1}, X_{2}, \ldots, X_{n}\}$ be a length $n$ sequence of binary random variables which can be modelled using a de Bruijn process with word length $m$ and transition probabilities $p_{i}^{j}$.  A correlated Bernoulli distribution $\pi^{n}$ of sequence $X$ is given in Theorem \ref{CBD1}, which is equivalent to the joint distribution of all binary variables.  As part of the theorem, $\pi^{m}(i)$ denotes the marginal probability of obtaining the word $m$ as part of the stationary distribution of the process. Due to the de Bruijn framework, this expression is given in terms of the word length, marginal probabilities of the de Bruijn words and the transition probabilities. To simplify the notation and to apply to a general word length, the words have been written in terms of the decimal representation of the binary values. This is expressed as $\sum_{i=1}^{m} k_{i} \hspace{0.1cm} 2^{i-1}$, where $k_{i} \in \{0,1\}$ is each letter in the word.

\begin{theorem}[Correlated Bernoulli Distribution $n \ge m$] \label{CBD1}
\rm{For correlated binary random sequence,} $X = \{X_{1}, X_{2}, \ldots, X_{n}\}$\rm{, where} $x= \{x_{1}, x_{2}, \ldots, x_{n}\}$ \rm{is a realisation from} $X$\rm{, the joint probability density is given as follows:}
\begin{equation}
\begin{split}
\pi^{n}(x_{1} & x_{2} \ldots x_{n}) \\
&= P(X_{1} = x_{1}, X_{2} = x_{2}, \ldots, X_{n} = x_{n}) \\
&= \pi^{n}(0...0)^{(1-x_{1})...(1-x_{n})} \times \pi^{n}(0...01)^{(1-x_{1})...(1-x_{n-1})(x_{n})} \times \ldots \\
&\hspace{1cm} \times \pi^{n}(1...10)^{(x_{1})...(x_{n-1})(1-x_{n})} \times \pi^{n}(1...1)^{(x_{1})...(x_{n})} \\
&= \prod_{i=0}^{2^{n}-1} \bigg( \pi^{n}(i) \bigg)^{\prod_{j=1}^{n} \bigg[ \big( x_{j} \big) ^{\big[\frac{1}{2^{n-j}} (i - (i \hspace{0.1cm} \text{mod} 2^{n-j})) \big] \text{mod} 2 } \big( 1 - x_{j} \big) ^{\big[\frac{1}{2^{n-j}} ((2^{n}-i-1) - ((2^{n}-i-1) \hspace{0.1cm} \text{mod} 2^{n-j})) \big] \text{mod} 2 } \bigg]}
\end{split}
\end{equation}
\rm{where}
\begin{equation}
\begin{split}
\pi^{n}(i) &= \sum_{j=0}^{2^{m}-1} \prod_{k=0}^{m-1} \pi^{m}(j) \hspace{0.2cm} p^{2^{k+1} \big( j \hspace{0.1cm}  \text{mod} 2^{m-k-1}\big) + \big[ \frac{1}{2^{n-k-1}} (i - (i \hspace{0.1cm} \text{mod} 2^{n-k-1})) \big] \text{mod} 2^{m} } _{2^{k} \big( j \hspace{0.1cm}  \text{mod} 2^{m-k} \big) + \big[ \frac{1}{2^{n-k}} (i - (i \hspace{0.1cm} \text{mod} 2^{n-k})) \big] \text{mod} 2^{m} } \\
& \hspace{1cm} \times \prod_{s=m}^{n-1} p_{\big[ \frac{1}{2^{n-s}} (i - (i \hspace{0.1cm} \text{mod} 2^{n-s})) \big] \text{mod} 2^{m}} ^{\big[ \frac{1}{2^{n-s-1}} (i - (i \hspace{0.1cm} \text{mod} 2^{n-s-1})) \big] \text{mod} 2^{m} }
\end{split}
\end{equation}
\end{theorem}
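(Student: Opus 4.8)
The plan is to prove the theorem in two logically independent parts: first, that the large outer product collapses to the single intended term $\pi^n(x)$, and second, that the stated closed form for each $\pi^n(i)$ is indeed the probability the de Bruijn process assigns to the length-$n$ letter string whose decimal code is $i$.

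For the first part, I would show that the double-indexed exponent attached to $\pi^n(i)$ is exactly the indicator that the binary word coded by $i$ coincides with the realisation $x$ (equal to $1$ when they match and $0$ otherwise, under the convention $0^0=1$). The engine here is a bit-extraction identity: for any integer $0\le i<2^n$, the quantity $\bigl[\tfrac{1}{2^{n-j}}\bigl(i-(i\bmod 2^{n-j})\bigr)\bigr]\bmod 2$ is precisely the $j$-th binary digit of $i$, since $\tfrac{1}{2^{n-j}}(i-(i\bmod 2^{n-j}))=\lfloor i/2^{n-j}\rfloor$ and the outer $\bmod 2$ strips the low bit. The same computation applied to $2^n-1-i$, the bitwise complement of $i$, returns the complementary digit. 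Consequently each factor $(x_j)^{a_j}(1-x_j)^{b_j}$ has $a_j$ equal to the $j$-th bit of $i$ and $b_j=1-a_j$, so a one-line case check on $x_j\in\{0,1\}$ shows the factor equals $1$ iff $x_j$ matches the $j$-th bit of $i$; the product over $j$ is therefore $1$ iff every bit matches, leaving exactly one surviving term. This part is conceptually routine, the only care needed being the bookkeeping of the two bit-conventions (for $i$ and for its complement).

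For the second part, the Markov structure of Definition \ref{DBPdef} does the heavy lifting. Writing the length-$n$ string coded by $i$ in its overlapping-window form $W_1,\dots,W_{n-m+1}$, the Markov property factorises its probability as $\pi^m(W_1)\prod_{t} p_{W_t}^{W_{t+1}}$. I would then match the two pieces of the stated formula to this factorisation: the bulk product $\prod_{s=m}^{n-1} p^{\cdots}_{\cdots}$ should reproduce all the word-to-word transitions, where the same bit-extraction lemma confirms that $\bigl[\lfloor i/2^{n-s}\rfloor\bigr]\bmod 2^m$ names the $m$-letter window sitting at position $s$ in $i$; and the remaining factor $\sum_{j=0}^{2^m-1}\pi^m(j)\prod_{k=0}^{m-1}p^{\cdots}_{\cdots}$ should equal the stationary marginal $\pi^m(W_1)$ of the initial word, which follows from the stationarity relation $\pi^m=\pi^m P$ (equivalently, by unfolding the first word one letter at a time through $m$ successive shifts and summing out the unknown earlier word $j$).

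The cleanest way to discharge the index verification, and what I expect to be the genuine obstacle, is an induction on $n$ rather than a frontal assault on the closed-form floor-and-modulus expressions. The inductive step appends one letter $x_n$: on the left this multiplies $\pi^{n-1}$ by a single transition probability $p_{W_{n-m}}^{W_{n-m+1}}$, and on the right it adds exactly one factor to the bulk product $\prod_{s=m}^{n-1}$ while leaving the initial-word sum untouched, so the two sides advance in lockstep. The base case $n=m$ is where the work concentrates: there the bulk product is empty and one must verify that $\sum_{j}\pi^m(j)\prod_{k=0}^{m-1}p^{\cdots}_{\cdots}$ genuinely collapses to $\pi^m$ of that single word via the stationarity equations. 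The main difficulty throughout is purely notational, namely confirming that every floor-and-modulus index points at the intended bit or window uniformly in $m$ and $n$, and in particular that the $k=m-1$ factor of the initial block meshes correctly with the $s=m$ factor of the bulk block at their junction.
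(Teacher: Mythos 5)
Your proposal is correct, and it reaches the result by a genuinely different route from the paper's own proof, chiefly in how the expression for $\pi^{n}(i)$ is handled. Your first part (the bit-extraction argument showing the outer product collapses to the single factor indexed by the code of $x$, with the complement $2^{n}-1-i$ supplying the powers of $1-x_{j}$) is essentially what the paper does, though the paper argues it descriptively rather than via an explicit identity. For the second part, however, the paper does not verify the closed form against the factorisation $\pi^{m}(W_{1})\prod_{t}p_{W_{t}}^{W_{t+1}}$ as you do; it derives the formula generatively, by the law of total probability over an unobserved word $j$ that \emph{precedes} the observed sequence, $\pi^{n}(i)=\sum_{j}\pi^{m}(j)P(i\mid j)$, with $P(i\mid j)$ unfolded as a product of exactly $n$ transitions: the first $m$ mix letters of $j$ with letters of $i$ (your ``initial block'') and the last $n-m$ involve letters of $i$ only (your ``bulk product''). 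Under that reading the sum over $j$ needs no further justification, whereas under your reading the same sum must be shown to collapse to $\pi^{m}(W_{1})$, which you correctly discharge by $m$ applications of the stationarity relation $\pi^{m}=\pi^{m}P$. The two generative stories agree precisely because of stationarity, so your route makes explicit a hypothesis the paper leaves implicit (it matters, since the theorem's $\pi^{m}$ is defined as the stationary marginal); that, together with your induction on $n$, which replaces the paper's informal ``this pattern continues'' treatment of the floor-and-modulus indices with a lockstep argument concentrating all the work in the base case $n=m$, is a genuine gain in rigour. What the paper's derivation buys instead is motivation: it shows where the formula comes from rather than checking it after the fact.
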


\begin{proof}
See Appendix
\end{proof}

Consider a simple example where $n=3$ and $m=2$.  By regarding all possible sequences of letters of length $n=3$, the distribution can be expressed as a product of the marginal probabilities of these sequences as follows:
\begin{equation}
\begin{split}
\pi^{3}(x_{1},x_{2},x_{3}) &= \pi^{3}(000)^{(1-x_{1})(1-x_{2})(1-x_{3})} \times \pi^{3}(001)^{(1-x_{1})(1-x_{2})(x_{3})} \times \pi^{3}(010)^{(1-x_{1})(x_{2})(1-x_{3})} \\
& \hspace{0.5cm} \times \pi^{3}(011)^{(1-x_{1})(x_{2})(x_{3})} \times \pi^{3}(100)^{(x_{1})(1-x_{2})(1-x_{3})} \times \pi^{3}(101)^{(x_{1})(1-x_{2})(x_{3})} \\
& \hspace{0.5cm} \times \pi^{3}(110)^{(x_{1})(x_{2})(1-x_{3})} \times \pi^{3}(111)^{(x_{1})(x_{2})(x_{3})} ,
\end{split}
\end{equation}
for any $x_{i} \in \{0, 1\}, $ $ i=1,2,3$. To express each $\pi^{3}(x_{1} x_{2} x_{3})$ in terms of the length $m$ de Bruijn process,  any initial boundary conditions must be included.  All transitions start with an existing word, this includes taking account of all possible starting words. The law of total probability is used as follows:
\begin{equation}
\begin{split}
\pi^{3}(x_{1}x_{2}x_{3}) &= \sum_{j=0}^{3} P(x_{1}x_{2}x_{3} | j) \pi^{2}(j) \\
&= P(x_{1}x_{2}x_{3} | 00) \pi^{2}(00) + P(x_{1}x_{2}x_{3} | 01) \pi^{2}(01) \\
& \hspace{1cm} + P(x_{1}x_{2}x_{3} | 10) \pi^{2}(10) + P(x_{1}x_{2}x_{3} | 11) \pi^{2}(11),
\end{split}
\end{equation}
This can further be expressed in terms of the transition probabilities. For example, letting $x_{1}=1, x_{2}=0, x_{3}=1$ produces the following:
\begin{equation}
\pi^{3}(101) = \pi^{2}(00) p_{00}^{01} p_{01}^{10} p_{10}^{01} + \pi^{2}(01) p_{01}^{11} p_{11}^{10} p_{10}^{01} + \pi^{2}(10) p_{10}^{01} p_{01}^{10} p_{10}^{01} + \pi^{2}(11) p_{11}^{11} p_{11}^{10} p_{10}^{01}.
\end{equation}

\section{Inference} \label{Inf}

Given an observed sequence, $x$, of binary letters with an unknown de Bruijn correlation structure, inference can be performed by estimating both the de Bruijn word length $m$ and the corresponding transition probabilities $p_{i}^{j}$ to an associated de Bruijn process. The joint likelihood of the sequence follows from the correlated Bernoulli distribution in Theorem \ref{CBD1}.  Due to the different number of transition probabilities required for each word length, it is initially assumed that the word length is known. The likelihood of a given sequence $X$ for $m=2$ is given in Lemma \ref{TL2} and the likelihood for the general case where $m \ge 1$ is given in Theorem \ref{TLM}. In both cases, to simplify notation, $n_{i}^{j}$ denotes the number of times the transition from the word $i$ to the word $j$ takes place in the sequence. The numerical representation of the binary notation is also included in the general case. The likelihood can then be applied to estimate the transition probabilities through frequentist maximum likelihood estimation or using Bayesian methods. 

It is not obvious that the end form of the likelihood gives the joint probability of the sequence. The ordering of the letters in the sequence is fixed and, due to the de Bruijn structure, for each word in the sequence there are only two possible words that can follow. Hence, the distinct number of times each transition occurs in the sequence defines the exact ordering of the letters. 

\begin{lemma}[Transition Likelihood, $m=2$] \label{TL2}
\rm{For a given sequence, $x = \{x_{1}, x_{2}, \ldots, x_{n}\}$, the joint likelihood in terms of the length $m=2$ de Bruijn transition probabilities $p_{i}^{j}$ is given as follows:}
\begin{equation}
\begin{split}
\mathcal{L}(X|p) &= (p_{00}^{00})^{\sum_{i=1}^{n-2} (1-x_{i})(1-x_{i+1})(1-x_{i+2})} \times (p_{00}^{01})^{\sum_{i=1}^{n-2} (1-x_{i})(1-x_{i+1})(x_{i+2})} \\
& \hspace{1cm} \times (p_{01}^{10})^{\sum_{i=1}^{n-2} (1-x_{i})(x_{i+1})(1-x_{i+2})} \times (p_{01}^{11})^{\sum_{i=1}^{n-2} (1-x_{i})(x_{i+1})(x_{i+2})} \\
& \hspace{1cm} \times (p_{10}^{00})^{\sum_{i=1}^{n-2} (x_{i})(1-x_{i+1})(1-x_{i+2})} \times (p_{10}^{11})^{\sum_{i=1}^{n-2} (x_{i})(1-x_{i+1})(x_{i+2})} \\
& \hspace{1cm} \times (p_{11}^{10})^{\sum_{i=1}^{n-2} (x_{i})(x_{i+1})(1-x_{i+2})} \times (p_{11}^{11})^{\sum_{i=1}^{n-2} (x_{i})(x_{i+1})(x_{i+2})} \\
&= (p_{00}^{00})^{n_{00}^{00}} \hspace{0.1cm} (p_{00}^{01})^{n_{00}^{01}} \hspace{0.1cm} (p_{01}^{10})^{n_{01}^{10}} \hspace{0.1cm} (p_{01}^{11})^{n_{01}^{11}} \hspace{0.1cm} (p_{10}^{00})^{n_{10}^{00}} \hspace{0.1cm} (p_{10}^{01})^{n_{10}^{01}} \hspace{0.1cm} (p_{11}^{10})^{n_{11}^{10}} \hspace{0.1cm} (p_{11}^{11})^{n_{11}^{11}} \\
&= (1-p_{00}^{01})^{n_{00}^{00}} \hspace{0.1cm} (p_{00}^{01})^{n_{00}^{01}} \hspace{0.1cm} (1-p_{01}^{11})^{n_{01}^{10}} \hspace{0.1cm} (p_{01}^{11})^{n_{01}^{11}} \hspace{0.1cm} (1-p_{10}^{01})^{n_{10}^{00}} \hspace{0.1cm} (p_{10}^{01})^{n_{10}^{01}} \hspace{0.1cm} \\
& \hspace{1cm} \times (1-p_{11}^{11})^{n_{11}^{10}} \hspace{0.1cm} (p_{11}^{11})^{n_{11}^{11}},
\end{split}
\end{equation}
\end{lemma}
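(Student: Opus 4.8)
The plan is to read the likelihood off the Markov structure of the DBP (Definition~\ref{DBPdef}) specialised to $m=2$, and then reorganise the resulting product of transition probabilities by grouping identical transitions. Conditioning on the first word, a realisation $x=\{x_1,\ldots,x_n\}$ corresponds to the word sequence $W_1=(x_1x_2),\,W_2=(x_2x_3),\,\ldots,\,W_{n-1}=(x_{n-1}x_n)$, and its conditional joint probability is the product of the one-step transitions taken along this path,
\[
\mathcal{L}(X\mid p)=\prod_{i=1}^{n-2} p_{(x_i x_{i+1})}^{(x_{i+1}x_{i+2})}.
\]
This is exactly the transition part of the expression in Theorem~\ref{CBD1} once the marginal factor attached to the starting word is dropped, so the first task is simply to identify each factor $p_{(x_ix_{i+1})}^{(x_{i+1}x_{i+2})}$ with the transition determined by the triple $(x_i,x_{i+1},x_{i+2})$.

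Next I would convert each factor into a product over all eight admissible transitions carrying indicator exponents. At a fixed position the product $(1-x_i)(1-x_{i+1})(1-x_{i+2})$ equals $1$ precisely when $(x_i,x_{i+1},x_{i+2})=(0,0,0)$ and vanishes otherwise, and the other seven sign patterns behave analogously. Since exactly one of these eight indicators is nonzero at each position, the single realised factor $p_{(x_ix_{i+1})}^{(x_{i+1}x_{i+2})}$ can be rewritten as the product over all eight transitions, each raised to its position indicator. Taking the product over $i=1,\ldots,n-2$ and using $a^{b}a^{c}=a^{b+c}$ collapses the positional product into a sum inside each exponent, which is exactly the first displayed line of the lemma. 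By the definition of $n_i^j$ as the number of occurrences of the transition $i\to j$, each exponent sum is precisely $n_i^j$, giving the second line.

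Finally, the third line follows from the de Bruijn outgoing-edge constraint: from any word only two edges leave (append a $0$ or a $1$) and their probabilities form a distribution, so $p_{00}^{00}=1-p_{00}^{01}$, $p_{01}^{10}=1-p_{01}^{11}$, $p_{10}^{00}=1-p_{10}^{01}$ and $p_{11}^{10}=1-p_{11}^{11}$. Substituting these four identities rewrites the likelihood in terms of the four free transition probabilities, as claimed.

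I expect the main obstacle to be bookkeeping rather than any deep idea: one must check that the triple-to-transition map is a bijection onto the eight admissible edges and that no impossible transition (for instance $00\to10$, where the shared middle letter is violated) is ever produced, so that the indicator decomposition is both exhaustive and mutually exclusive. This is the precise content of the remark preceding the lemma — that the two-successor structure of the de Bruijn graph makes the multiset of counts $\{n_i^j\}$ determine the ordering of the sequence — and it is what guarantees that collapsing the ordered product into counts loses no information.
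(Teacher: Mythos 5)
Your proposal is correct and follows essentially the same route as the paper's own proof: write the sequence as overlapping length-$2$ words, express the (conditional) likelihood as the ordered product of one-step transition probabilities $\prod_{i=1}^{n-2} p_{(x_i x_{i+1})}^{(x_{i+1}x_{i+2})}$, and collect like terms into counts $n_i^j$, with the final line coming from the sum-to-one constraint on the two outgoing edges of each word. Your indicator-exponent bookkeeping and the explicit note that the starting word's marginal factor is dropped are merely more detailed renderings of what the paper compresses into ``by collecting like terms for each possible transition probability.''
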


\begin{proof}
See Appendix
\end{proof}

\begin{theorem}[Transition Likelihood, $m \ge 1$] \label{TLM}
\rm{For a given sequence, $x = \{x_{1}, x_{2}, \ldots, x_{n}\}$, the joint likelihood in terms of the length $m$ de Bruijn transition probabilities $p_{i}^{j}$ is given as follows:}
\begin{equation}
\begin{split}
\mathcal{L}(X|p) &= \left( p_{0 \ldots 0}^{0 \ldots 0} \right)^{\sum_{i=1}^{n-m} (1-x_{i}) \ldots (1-x_{i+m})} \times \left( p_{0 \ldots 00}^{0 \ldots 01} \right)^{\sum_{i=1}^{n-m} (1-x_{i}) \ldots (1-x_{i+m-1})(x_{i+m})} \\
& \hspace{1cm} \times \ldots \times \left( p_{1 \ldots 11}^{1 \ldots 10} \right)^{\sum_{i=1}^{n-m} (x_{i}) \ldots (x_{i+m-1})(1-x_{i+m})} \times \left( p_{1 \ldots 1}^{1 \ldots 1} \right)^{\sum_{i=1}^{n-m} (x_{i}) \ldots (x_{i+m})} \\
&= \left( p_{0 \ldots 0}^{0 \ldots 0} \right)^{n_{0 \ldots 0}^{0 \ldots 0}} \times \left( p_{0 \ldots 00}^{0 \ldots 01} \right)^{n_{0 \ldots 00}^{0 \ldots 01}} \times \ldots \times \left( p_{1 \ldots 11}^{1 \ldots 10} \right)^{n_{1 \ldots 11}^{1 \ldots 10}} \times \left( p_{1 \ldots 1}^{1 \ldots 1} \right)^{n_{1 \ldots 1}^{1 \ldots 1}} \\
&= \prod_{i=0}^{2^{m+1}-1} \left(p_{\frac{1}{2} (i - (i \hspace{0.1cm} \text{mod } 2))}^{i \hspace{0.1cm} \text{mod } 2^{m}} \right) ^{n_{\frac{1}{2} (i - (i \hspace{0.1cm} \text{mod } 2))}^{i \hspace{0.1cm} \text{mod } 2^{m}}}\\
&= \prod_{i=0}^{2^{m}-1} \left( 1 - p_{i}^{(2i+1) \hspace{0.1cm} \text{mod } 2^{m}} \right)^{n_{i}^{((2i+1) \hspace{0.1cm} \text{mod } 2^{m}) - 1}} \left( p_{i}^{(2i+1) \hspace{0.1cm} \text{mod } 2^{m}} \right) ^{n_{i}^{((2i+1) \hspace{0.1cm} \text{mod } 2^{m})}}.
\end{split}
\end{equation}
\end{theorem}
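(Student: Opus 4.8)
The plan is to obtain the product‑of‑transitions form by applying the Markov property from Definition \ref{DBPdef}, then to rewrite the exponents first as occurrence counts, next in the compact decimal encoding, and finally to collapse the two out‑edges of each node using the binary normalisation constraint. The probabilistic content is light; essentially all of the work is index bookkeeping, so I would organise the argument as four successive rewritings of a single expression, one for each displayed line.

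First I would read the word sequence off the letters: since $W_t = (x_t, x_{t+1}, \ldots, x_{t+m-1})$, the realisation $x$ determines the words $W_1, \ldots, W_{n-m+1}$, and by Definition \ref{DBPdef} the joint probability factorises as $\pi^m(W_1)\prod_{t=1}^{n-m} p_{W_t}^{W_{t+1}}$. Working conditionally on the initial word $W_1$ (equivalently, isolating the transition‑dependent part of the density in Theorem \ref{CBD1}), the likelihood is exactly $\mathcal{L}(X\mid p)=\prod_{t=1}^{n-m} p_{W_t}^{W_{t+1}}$, a product of $n-m$ edge probabilities. Each consecutive pair $(W_t, W_{t+1})$ is pinned down by the $m+1$ letters $x_t, \ldots, x_{t+m}$, so the factor contributed at step $t$ equals $p_i^j$ precisely when those letters match the bit pattern of the edge $i \to j$; this matching is captured by a product of $m+1$ terms, each $x_{t+\ell}$ or $(1-x_{t+\ell})$ according to whether the corresponding bit is $1$ or $0$. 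This yields the first displayed line.

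Next I would collect equal factors. Summing the indicator product over $t = 1, \ldots, n-m$ counts the number of occurrences of each edge, which is by definition $n_i^j$; hence $\mathcal{L}(X\mid p)=\prod_{(i,j)} (p_i^j)^{n_i^j}$ over the $2^{m+1}$ directed de Bruijn edges, giving the second line. The third line is pure bookkeeping: encoding the $m+1$ defining letters of an edge as a single integer $i\in\{0,\ldots,2^{m+1}-1\}$, the source word is recovered by deleting one boundary letter and the target word by deleting the other, which in decimal are the maps $i \mapsto \tfrac12(i-(i \bmod 2))$ and $i \mapsto i \bmod 2^m$; substituting reindexes the product as a single product over $i$.

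For the final line I would use the defining property of the binary de Bruijn graph: every word $i$ has out‑degree two, with successors $(2i) \bmod 2^m$ and $(2i+1)\bmod 2^m$ obtained by appending $0$ or $1$, and normalisation forces $p_i^{(2i)\bmod 2^m}=1-p_i^{(2i+1)\bmod 2^m}$. Checking that $((2i+1)\bmod 2^m)-1=(2i)\bmod 2^m$ for every $i$ in range lets me pair the two edges leaving node $i$ and replace the ``append $0$'' factor by $1-p_i^{(2i+1)\bmod 2^m}$, collapsing the product to range over $i=0,\ldots,2^m-1$ as stated; the $m=2$ case of Lemma \ref{TL2} is then the specialisation. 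I expect the main obstacle to be purely notational rather than conceptual: verifying that the modular‑arithmetic expressions in the third and fourth lines really do select the intended source and target words under the paper's least‑significant‑bit decimal convention, and that the off‑by‑one in the $1-p$ exponent matches the complementary edge in every case. The Markov factorisation and the two‑successor normalisation are immediate; confirming the index algebra is where the care is needed.
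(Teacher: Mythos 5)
Your proposal is correct and follows essentially the same route as the paper's own proof: factorise the likelihood as a product of word-to-word transition probabilities via the Markov property (conditioning on the initial word), collect like terms into the counts $n_i^j$, re-encode each edge by its decimal representation, and use the row-sum normalisation of the two out-edges of each node to halve the parameter set. Your treatment of the index algebra (the maps $i \mapsto \tfrac12(i-(i \bmod 2))$, $i \mapsto i \bmod 2^m$, and the identity $((2i+1)\bmod 2^m)-1=(2i)\bmod 2^m$) is in fact more explicit than the paper's, but it is the same argument.
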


\begin{proof}
See Appendix
\end{proof}

Given a maximum likelihood approach, uncertainties on the estimates of the transition probabilities can be calculated using the Fisher information \citep{Feller1950}, $I(p_{i}^{j}) = -\mathbb{E} \left[ \frac{\partial^{2} \text{log}(\mathcal{L})}{\partial {p_{i}^{j}}^{2}} \right]$. Note that $p_{i}^{j}$ still denotes the probability of transitioning from the word $i$ to the word $j$, and $\text{log} (\mathcal{L})$ is the natural log of the likelihood given in Theorem \ref{TLM}. The expectation of a function $g(x_{1}, \ldots, x_{n})$ with respect to the $x_{i}$ is given as:
\begin{equation}
\mathbb{E}[g(X_{1}, \ldots, X_{n})] = \sum_{x_{1}=0}^{x_{1}=1} \cdots \sum_{x_{n}=0}^{x_{n}=1} g(x_{1}, \ldots, x_{n}) \pi(x_{1}, \ldots, x_{n}),
\end{equation}
where $\pi(x_{1}, \ldots, x_{n})$ is taken to be the correlated Bernoulli distribution from Theorem \ref{CBD1}. The expectation is hence dependent on the length of the sequence$n$, word length and transition probabilities.  Following from this, the Fisher information for the general case for word length $m$ is given in Theorem \ref{FI}.

\begin{theorem}[Fisher information, $m \ge 1$] \label{FI}
\rm{The Fisher information, } $I(p_{k}) = -\mathbb{E}\left[\frac{\partial^{2}\text{log}(\mathcal{L})}{\partial {p_{k}}^{2}}\right]$, \rm{ for each transition probability $p_{k} = p_{\frac{1}{2}(k - (k \text{ mod} 2))}^{k \text{ mod} 2^{m}}$ for $k=0, 1, \ldots, m-1$, sequence length $n$ and word length $m$ is given by:}
\begin{equation}
\begin{split}
I(p_{k}) &= -\mathbb{E}\left[\frac{\partial^{2}\text{log}(\mathcal{L})}{\partial {p_{k}}^{2}}\right] \\
&= \frac{1}{p_{k}} \sum_{i=0}^{n-m-1} \sum_{j=0}^{2^{n-m-1}-1} \pi^{n} \left( 2^{m+1} j + 2^{i}k - \left( 2^{m+1} - 1 \right) \left( j \hspace{0.1cm} \text{mod} 2^{i} \right) \right)
\end{split}
\end{equation}
\end{theorem}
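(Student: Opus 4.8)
The plan is to start from the likelihood already proved in Theorem \ref{TLM}, which conveniently isolates each transition probability in its own factor, differentiate its logarithm twice in a single $p_k$, and then take the expectation term by term against the joint law $\pi^n$ supplied by Theorem \ref{CBD1}. Writing $\mathcal{L}(X\mid p)=\prod_k (p_k)^{n_k}$ in the penultimate form of Theorem \ref{TLM}, where $n_k$ counts the occurrences of the directed edge $k$ (i.e.\ the overlapping $(m+1)$-letter window consisting of source word $w$ followed by the appended letter) as an overlapping block in $x$, we get $\log\mathcal{L}=\sum_k n_k\log p_k$. Since distinct edges live in distinct factors the Hessian is diagonal, with $\partial^2\log\mathcal{L}/\partial p_k^2=-n_k/p_k^2$, so that $I(p_k)=-\mathbb{E}[\partial^2\log\mathcal{L}/\partial p_k^2]=\mathbb{E}[n_k]/p_k^2$. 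The whole theorem therefore reduces to evaluating the single expected count $\mathbb{E}[n_k]$. The key simplification is that $n_k$ is a sum of indicators over the $n-m$ admissible window positions, so by linearity of expectation $\mathbb{E}[n_k]=\sum_{i}P(\text{the length-}(m+1)\text{ window starting at position }i\text{ equals the pattern }k)$.

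Next I would turn each of these window marginals into a sum of full-sequence probabilities. Pinning the $m+1$ letters of the window to the binary pattern $k$ leaves $n-m-1$ free letters, and summing $\pi^n$ over all $2^{n-m-1}$ completions recovers the marginal. Encoding the free letters as a single integer $j\in\{0,\dots,2^{n-m-1}-1\}$ and splicing the block $k$ into bit positions $i,\dots,i+m$ of the length-$n$ word is exactly what produces the displayed argument $2^{m+1}j+2^{i}k-(2^{m+1}-1)(j\bmod 2^{i})$: the low bits $j\bmod 2^{i}$ stay in place, the high bits of $j$ are shifted up by $m+1$ positions through the $2^{m+1}$ factor, the correction term $-(2^{m+1}-1)(j\bmod 2^{i})$ accounts for that shift acting only on the high part, and the pattern $k$ is inserted at weight $2^{i}$. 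Assembling these enumerations over all positions $i$ and completions $j$ reproduces the stated double sum for $\mathbb{E}[n_k]$.

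The two differentiations and the appeal to linearity of expectation are routine; the substance of the argument is combinatorial bookkeeping. The main obstacle I anticipate is verifying that the splice formula $2^{m+1}j+2^{i}k-(2^{m+1}-1)(j\bmod 2^{i})$ is a genuine bijection between pairs $(i,j)$ and the length-$n$ sequences carrying pattern $k$ at position $i$, with no double counting as $i$ varies and with the boundary positions $i=0$ and $i=n-m-1$ correctly covered. A secondary point to track carefully is the exact prefactor: the Hessian naturally yields $\mathbb{E}[n_k]/p_k^2$, and reconciling this with the single $1/p_k$ shown in the statement requires using the de Bruijn Markov factorisation $P(\text{window}=k)=\pi^m(w)\,p_k$, so that one power of $p_k$ hidden inside $\mathbb{E}[n_k]$ cancels; I would check this cancellation explicitly against the stated form rather than assume it.
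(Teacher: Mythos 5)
Your proposal follows essentially the same route as the paper's own proof: take the unconstrained product form $\mathcal{L}(X|p)=\prod_{k} p_{k}^{n_{k}}$ from Theorem \ref{TLM}, differentiate the log twice to get a diagonal Hessian with entries $-n_{k}/p_{k}^{2}$, and evaluate $\mathbb{E}[n_{k}]$ by linearity over the $n-m$ window positions, writing each window marginal as a sum over completions of full-sequence probabilities $\pi^{n}$ via the splice encoding $2^{m+1}j+2^{i}k-(2^{m+1}-1)(j\bmod 2^{i})$. Your reading of that encoding (low bits of $j$ kept in place, high bits of $j$ shifted up by $m+1$ positions, pattern $k$ inserted at weight $2^{i}$) is exactly the paper's bookkeeping, and it is easily verified: the spliced integer is $(j\bmod 2^{i})+2^{i}k+2^{m+1}\left(j-(j\bmod 2^{i})\right)$, which rearranges to the displayed expression. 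One small correction to your list of worries: no injectivity \emph{across} different $i$ is needed, since $n_{k}$ is a sum of indicators over positions and a sequence containing the pattern in several windows must be counted once per window; only the per-position bijection in $j$ matters, and that is immediate.

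The ``secondary point'' you flag is, however, the real crux, and your proposed resolution does not work. Carried out honestly, your argument gives
\begin{equation}
I(p_{k})=\frac{\mathbb{E}[n_{k}]}{p_{k}^{2}}=\frac{1}{p_{k}^{2}}\sum_{i=0}^{n-m-1}\sum_{j=0}^{2^{n-m-1}-1}\pi^{n}\left(2^{m+1}j+2^{i}k-(2^{m+1}-1)(j\bmod 2^{i})\right),
\end{equation}
with prefactor $1/p_{k}^{2}$; indeed, every displayed step of the paper's own proof carries $1/p_{k}^{2}$, and the exponent silently drops to $1$ only in the final line of the proof and in the statement of Theorem \ref{FI}. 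The cancellation you suggest, via $P(\text{window}=k)=\pi^{m}(w)\,p_{k}$, would rewrite the answer as $(n-m)\,\pi^{m}(w)/p_{k}$; that is a legitimate re-expression of the Fisher information, but it is \emph{not} the stated formula, because once the factor $p_{k}$ is extracted the summand is no longer $\pi^{n}$. Concretely, the stated expression evaluates (under stationarity, which is how Theorem \ref{CBD1} defines $\pi^{n}$) to $\frac{1}{p_{k}}\sum_{i}\sum_{j}\pi^{n}(\cdot)=(n-m)\,\pi^{m}(w)$, which differs from the true value $(n-m)\,\pi^{m}(w)/p_{k}$ by a factor of $p_{k}$. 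So you cannot ``check the cancellation against the stated form'' and succeed: you should finish your proof at the $1/p_{k}^{2}$ expression, which matches the paper's penultimate display, and note that the $1/p_{k}$ in the theorem as printed appears to be an error rather than something your argument must reproduce.
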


\begin{proof}
See Appendix
\end{proof}

Consider the case for the transition probability $p_{00}^{00}$ where $m=2$ and $n=3$. The expectation of the double derivative with respect to $p_{00}^{00}$ is as follows:
\begin{equation}
\begin{split}
\mathbb{E}\left[\frac{\partial^{2}\text{log}(\mathcal{L})}{\partial {p_{00}^{00}}^{2}}\right] &= -\frac{1}{\left(p_{00}^{00}\right)^{2}} \mathbb{E}\left[ (1-x_{1})(1-x_{2})(1-x_{3}) \right] \\
&= -\frac{1}{\left(p_{00}^{00}\right)^{2}} \sum_{x_{1}=0}^{x_{1}=1} \sum_{x_{2}=0}^{x_{2}=1} \sum_{x_{3}=0}^{x_{3}=1} (1-x_{1})(1-x_{2})(1-x_{3}) \pi^{(3)}(x_{1}, x_{2}, x_{3}) \\
&= -\frac{1}{\left(p_{00}^{00}\right)^{2}} \pi^{3}(000)
\end{split}
\end{equation}
Expanding this for general $n \ge 3$, results in:
\begin{equation}
\begin{split}
\mathbb{E}\left[\frac{\partial^{2}\text{log}(\mathcal{L})}{\partial {p_{00}^{00}}^{2}}\right] &= -\frac{1}{p_{00}^{00}} \sum_{i=0}^{n-3} \sum_{j=0}^{2^{n-3}-1} \pi^{n} \left( 2^{3} j - \left( 2^{3} - 1 \right) \left( j \hspace{0.1cm} \text{mod} 2^{i} \right) \right) \\
&= -\frac{1}{p_{00}^{00}} \sum_{i=0}^{n-3} \sum_{j=0}^{2^{n-3}-1} \pi^{n} \left( 8 j - 7 \left( j \hspace{0.1cm} \text{mod} 2^{i} \right) \right)
\end{split}
\end{equation}
Then for any transition probability, $p$, the general result for $m=2$ is as follows:
\begin{equation}
\begin{split}
\mathbb{E}\left[\frac{\partial^{2}\text{log}(\mathcal{L})}{\partial {p_{k}}^{2}}\right] &= -\frac{1}{p_{k}} \sum_{i=0}^{n-3} \sum_{j=0}^{2^{n-3}-1} \pi^{n} \left( 2^{3} j + 2^{i}k - \left( 2^{3} - 1 \right) \left( j \hspace{0.1cm} \text{mod} 2^{i} \right) \right) \\
&= -\frac{1}{p_{k}} \sum_{i=0}^{n-3} \sum_{j=0}^{2^{n-3}-1} \pi^{n} \left( 8 j + 2^{i}k - 7 \left( j \hspace{0.1cm} \text{mod} 2^{i} \right) \right),
\end{split}
\end{equation}
for $\{p_{0}, p_{1}, p_{2}, p_{3}, p_{4}, p_{5}, p_{6}, p_{7}\} = \{p_{00}^{00}, p_{00}^{01}, p_{01}^{10}, p_{01}^{11}, p_{10}^{00}, p_{11}^{10}, p_{11}^{11}\}$ and $n \ge 3$. The fisher information is then given by $- \mathbb{E} \left[\frac{\partial^{2}\text{log}(\mathcal{L})}{\partial {p_{k}}^{2}}\right]$.

Alternatively, the likelihood in Theorem \ref{TLM} can be used to estimate the transition probabilities using Bayes' theorem. The advantage to this is that the prior distribution can be used to specify any prior knowledge already known about the transition probabilities. For example, it may be known that the sequence is very clustered in blocks of $1$'s, and this can be incorporated into the prior distribution to put higher weighting onto the transition that remains at the all $1$ word. Due to the form of the likelihood (Theorem \ref{TLM}),  a Beta prior of the form, $P(p) = \frac{\Gamma(\alpha + \beta)}{\Gamma(\alpha)\Gamma(\beta)}p^{\alpha -1}(1-p)^{\beta -1}$, for $\alpha > 0$ and $\beta > 0$,  is used to produce the posterior distribution for the transition probabilities. 

First consider the de Bruijn word length $m=2$ case. The transition likelihood in Lemma \ref{TL2} can be combined with a Beta prior to obtain the posterior distribution as follows:
\begin{equation}
\begin{split}
P(p|X) &= \frac{\mathcal{L}(X|p)P(p)}{\int \mathcal{L}(X|p)P(p) dp} \\
&\propto \hspace{0.2cm}  \mathcal{L}(X|p)P(p) \\
&= \hspace{0.2cm} (1 - p_{00}^{01})^{n_{00}^{00}+\beta_{1}-1} \hspace{0.1cm} (p_{00}^{01})^{n_{00}^{01}+\alpha_{1}-1} \hspace{0.2cm}  \\
& \hspace{0.8cm} \times (1 - p_{01}^{11})^{n_{01}^{10}+\beta_{2}-1} \hspace{0.1cm} (p_{01}^{11})^{n_{01}^{11}+\alpha_{2}-1} \hspace{0.2cm}  \\
& \hspace{0.8cm} \times (1 - p_{10}^{01})^{n_{10}^{00}+\beta_{3}-1} \hspace{0.1cm} (p_{10}^{01})^{n_{10}^{01}+\alpha_{3}-1} \hspace{0.2cm}  \\ 
& \hspace{0.8cm} \times (1 - p_{11}^{11})^{n_{11}^{10}+\beta_{4}-1} \hspace{0.1cm} (p_{11}^{11})^{n_{11}^{11}+\alpha_{4}-1}.
\end{split}
\end{equation}

Clearly, the posterior distribution for the de Bruijn process transition probabilities is a product of beta densities.  Since the prior and posterior distributions are conjugate, the following can be stated:
\begin{equation} \label{MEF2}
\begin{split}
\int \mathcal{L}(X|p)P(p) dp = & \frac{\Gamma(n_{00}^{00}+\beta_{1})\Gamma(n_{00}^{01}+\alpha_{1})}{\Gamma(n_{00}^{00}+n_{00}^{01}+\beta_{1}+\alpha_{1})} \times \frac{\Gamma(n_{01}^{10}+\beta_{2})\Gamma(n_{01}^{11}+\alpha_{2})}{\Gamma(n_{01}^{10}+n_{01}^{11}+\beta_{2}+\alpha_{2})} \times \\
& \frac{\Gamma(n_{10}^{00}+\beta_{3})\Gamma(n_{10}^{01}+\alpha_{3})}{\Gamma(n_{10}^{00}+n_{10}^{01}+\beta_{3}+\alpha_{3})} \times 
\frac{\Gamma(n_{11}^{10}+\beta_{4})\Gamma(n_{11}^{11}+\alpha_{4})}{\Gamma(n_{00}^{10}+n_{11}^{11}+\beta_{4}+\alpha_{4})},
\end{split}
\end{equation}
which specifies the model evidence. For the general case when $m \ge 1$, the posterior distribution for the transition probabilities is given in Theorem \ref{MEM}.

\begin{theorem}[Posterior Distribution for de Bruijn Probability Transitions, $m \ge 1$] \label{MEM}
\rm{Applying Bayes' theorem, the posterior distribution of the de Bruijn transition probabilities is:}
\begin{equation}
\begin{split}
P(p | X) & = \frac{\mathcal{L}(X|p,m) P(p|m)}{P(X)} \\
& = \frac{\mathcal{L}(X|p,m)P(p|m)}{\int \mathcal{L}(X|p,m)P(p|m) dp}
\end{split}
\end{equation}
\rm{where,}
\begin{equation}
\begin{split}
\mathcal{L}(X|p,m)P(p|m) &= \prod_{i=0}^{2^{m}-1} (1 - p_{i}^{((2i+1) \hspace{0.1cm} \text{mod } 2^{m})})^{n_{i}^{((2i+1) \hspace{0.1cm} \text{mod } 2^{m})-1 }+\beta_{i+1}-1} \\
& \hspace{2cm} \times (p_{i}^{((2i+1) \hspace{0.1cm} \text{mod } 2^{m})})^{n_{i}^{((2i+1) \hspace{0.1cm} \text{mod } 2^{m}) }+\alpha_{i+1}-1}
\end{split}
\end{equation}
\rm{and}
\begin{equation}
\int P(X|p,m)P(p|m) dp = \prod_{i=0}^{2^{m}-1} \frac{\Gamma(n_{i}^{((2i+1) \hspace{0.1cm} \text{mod } 2^{m}) - 1} + \beta_{i+1})\Gamma(n_{i}^{((2i+1) \hspace{0.1cm} \text{mod } 2^{m})}) + \alpha_{i+1})}{\Gamma(n_{i}^{((2i+1) \hspace{0.1cm} \text{mod } 2^{m}) - 1} + n_{i}^{((2i+1) \hspace{0.1cm} \text{mod } 2^{m})} +\beta_{i+1}+\alpha_{i+1})}
\end{equation}
\end{theorem}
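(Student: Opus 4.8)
The plan is to build the posterior directly from the factorised likelihood of Theorem~\ref{TLM} together with independent Beta priors, exploiting conjugacy so that both the posterior kernel and the model evidence reduce to a product of one-dimensional Beta integrals. First I would recall from Theorem~\ref{TLM} that the likelihood factorises over the $2^m$ de Bruijn words as
\begin{equation}
\mathcal{L}(X|p,m) = \prod_{i=0}^{2^m-1}\left(1-p_i^{(2i+1)\bmod 2^m}\right)^{n_i^{((2i+1)\bmod 2^m)-1}}\left(p_i^{(2i+1)\bmod 2^m}\right)^{n_i^{(2i+1)\bmod 2^m}}.
\end{equation}
The key structural fact, inherited from the de Bruijn graph, is that each word $i$ has exactly two outgoing edges, so the two transition probabilities leaving $i$ sum to one and the process is parameterised by the $2^m$ free quantities $p_i := p_i^{(2i+1)\bmod 2^m}$. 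Each factor above is then a Bernoulli-type kernel $p_i^{a_i}(1-p_i)^{b_i}$ in a single variable, with $a_i=n_i^{(2i+1)\bmod 2^m}$ and $b_i=n_i^{((2i+1)\bmod 2^m)-1}$.

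Second, I would assign an independent $\mathrm{Beta}(\alpha_{i+1},\beta_{i+1})$ prior to each $p_i$, giving the joint prior $P(p|m)=\prod_{i=0}^{2^m-1}\frac{\Gamma(\alpha_{i+1}+\beta_{i+1})}{\Gamma(\alpha_{i+1})\Gamma(\beta_{i+1})}\,p_i^{\alpha_{i+1}-1}(1-p_i)^{\beta_{i+1}-1}$, as stated in the Beta form preceding the theorem. Multiplying this by the likelihood and collecting exponents word-by-word yields the unnormalised posterior as the stated product of terms $p_i^{a_i+\alpha_{i+1}-1}(1-p_i)^{b_i+\beta_{i+1}-1}$, which is exactly the kernel $\mathcal{L}(X|p,m)P(p|m)$ in the theorem.

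Third, for the model evidence I would integrate the product of likelihood and prior over $[0,1]^{2^m}$. Because both factorise across the words and the priors are independent, the integral separates into a product of one-dimensional integrals, each a Beta integral
\begin{equation}
\int_0^1 p_i^{a_i+\alpha_{i+1}-1}(1-p_i)^{b_i+\beta_{i+1}-1}\,dp_i = \frac{\Gamma(a_i+\alpha_{i+1})\,\Gamma(b_i+\beta_{i+1})}{\Gamma(a_i+b_i+\alpha_{i+1}+\beta_{i+1})}.
\end{equation}
Substituting $a_i$ and $b_i$ gives the stated evidence, and dividing the kernel by this constant confirms that the posterior is a product of Beta densities with updated parameters. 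The $m=2$ calculation in \eqref{MEF2} is the special case $2^m=4$ and serves as a template for the general product.

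The main obstacle is not analytic but combinatorial: carefully verifying the index map. I must check that as $i$ ranges over $0,\ldots,2^m-1$ the pairs $\big(p_i^{(2i+1)\bmod 2^m},\,p_i^{((2i+1)\bmod 2^m)-1}\big)$ enumerate each word's two outgoing transitions exactly once, with no double-counting, so that the likelihood genuinely splits into $2^m$ independent single-parameter pieces and conjugacy applies term-by-term. Once this bookkeeping is confirmed, the conjugacy of the Beta family with the Bernoulli likelihood makes every remaining step a direct Beta-integral evaluation.
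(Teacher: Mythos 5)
Your proposal follows essentially the same route as the paper's own proof: apply Bayes' theorem with the factorised likelihood of Theorem \ref{TLM} and independent Beta priors, collect exponents word-by-word to obtain the posterior kernel, and invoke Beta--Bernoulli conjugacy so the evidence factorises into the stated product of Gamma-function ratios. If anything you are slightly more careful than the paper, which asserts the evidence directly from conjugacy, whereas you write out the one-dimensional Beta integral $\int_0^1 p_i^{a_i+\alpha_{i+1}-1}(1-p_i)^{b_i+\beta_{i+1}-1}\,dp_i$ explicitly and flag the combinatorial check that the index map enumerates each word's two outgoing transitions exactly once.
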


\begin{proof}
See Appendix
\end{proof}

Estimating the de Bruijn word length $m$ which was most likely used to generate the sequence is a more challenging problem since a different number of transition probabilities are required for different word lengths. Since the word lengths can only take integer values, we have chosen to proceed by comparing models using Bayes' factors \citep{Kass1995, OHagan1997}.  The method of Bayes' factors considers whether an observed sequence $X$ of binary random variables was generated from either a word length $m_{1}$ de Bruijn process (hypothesis 1) with probability $P(X | m_{1})$, or from a length $m_{2}$ de Bruijn process (hypothesis 2) with probability $P(X | m_{2})$.  The prior probabilities $P(m_{1})$ and $P(m_{2})$ are also defined,  giving the probability that the sequence was indeed generated using a length $m_{1}$ or $m_{2}$ de Bruijn process respectively.  When combined with the data, this then gives appropriate posterior probabilities $P(m_{1}| X)$ and $P(m_{2}| X) = 1 - P(m_{1}| X)$. If Bayes' theorem is considered in terms of an odds scale of these hypotheses when in favour of $m_{1}$,  we have the following relationship:
\begin{equation}
\frac{P(m_{1} | X)}{P(m_{2} | X)} = \frac{P(X | m_{1})}{P(X | m_{2})} \frac{P(m_{1})}{P(m_{2})}.
\end{equation} 
If we say that the hypotheses $m_{1}$ and $m_{2}$ are equally likely then the Bayes' factor can be defined to be the posterior odds in favour of $m_{1}$:
\begin{equation}
B_{1,2} = \frac{P(X | m_{1})}{P(X | m_{2})}.
\end{equation}

Since the transition probabilities are unknown parameters in this case,  an expression for $P(X | m_{k})$ can be found by integrating over the parameter space. This becomes:
\begin{equation}
P(X | m_{k}) = \int \mathcal{L}(X | p_{k}, m_{k}) P(p_{k} | m_{k}) \hspace{0.2cm} dp_{k},
\end{equation}
for $k \in \{1,2\}$, where $\mathcal{L}(X | p_{k}, m_{k})$ is the likelihood of the data and $P(p_{k} | m_{k})$ is the prior density of the model parameters, $p$. There is an obvious similarity between this expression and the model evidence in Theorem \ref{MEM}. Due to the conjugate priors,  the Bayes' factor ratio is stated to be the ratio of the model evidences for each of the model hypotheses, which is shown in Theorem \ref{MEM2}. We note here that it is not necessary to calculate the posterior on the transition probabilities since the expression for the Bayes' factor is only dependent on the prior density. For each calculation of $P(X | m_{k})$, we will know the length $m_{k}$ and hence the quantity of parameters, $p$, which are to be estimated. 

In the set up of the de Bruijn process, we make the assumption that the word length, $m$, will remain fairly small. This is considered reasonable since large word lengths create a vast number of transition probabilities to be estimated, and the increase in dimension does not have much effect on the accuracy of the estimates. Therefore, we make the choice to limit word lengths to not be greater than $10$.  This is a pragmatic limit, and could be increased if required. Hence, to choose the word length that best represents the data,  calculate $B_{i,j} = \frac{P(X | m_{i})}{P(X | m_{j})}$ for each pair of models where $i,j \in \{1,2,...,10\}$ and select the value for $m$ in which the Bayes' factor is consistently higher. When values of $B_{i,j}$ are large, this gives more evidence to reject the model with word length $m_{i}$ in favour of the model with word length $m_{j}$. 

By only selecting 10 different models to compare and choosing the one that best represents the data, we acknowledge that we have implemented a frequentist aspect to our method. Instead, we could opt to do this in a fully Bayesian way to maximise the Bayes' factor and allow any word length to be considered. However, to do this we would have to put a fairly strong prior on $m$ to minimise large potential word lengths. This is left for future work.

\begin{theorem}[Estimation of De Bruijn Word length by Bayes' factors, $m \ge 1$] \label{MEM2}
\
\rm{Consider a sequence of 0's and 1's which was generated under one of two hypotheses. The first is a de Bruijn process with word length $m_{1}$ and the second is a de Bruijn process with word length $m_{2}$. The Bayes' factor ratio is as follows:}
\begin{equation}
B_{1,2} = \frac{P(X | m_{1})}{P(X | m_{2})}
\end{equation}
\rm{where,}
\begin{equation}
\begin{split}
P(X | m_{k}) & = \int P(X|p,m_{k})P(p|m_{k}) dp \\
& = \prod_{i=0}^{2^{m_{k}}-1} \frac{\Gamma(n_{i}^{((2i+1) \hspace{0.1cm} \text{mod } 2^{m_{k}}) - 1} + \beta_{i+1})\Gamma(n_{i}^{((2i+1) \hspace{0.1cm} \text{mod } 2^{m_{k}})}) + \alpha_{i+1})}{\Gamma(n_{i}^{((2i+1) \hspace{0.1cm} \text{mod } 2^{m_{k}}) - 1} + n_{i}^{((2i+1) \hspace{0.1cm} \text{mod } 2^{m_{k}})} +\beta_{i+1}+\alpha_{i+1})} ,
\end{split}
\end{equation}
\rm{for} $k \in \{1,2\}$.
\rm{When values of} $B_{1,2}$ \rm{are large, we have more evidence to reject the first hypothesis with word length} $m_{1}$ \rm{in favour of the second hypothesis with word length} $m_{2}$. 
\end{theorem}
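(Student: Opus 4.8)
The plan is to reduce the statement to the evaluation of the two marginal likelihoods $P(X\mid m_1)$ and $P(X\mid m_2)$: since the hypotheses are taken equally likely, $P(m_1)=P(m_2)$, the posterior odds relation collapses to the ratio $B_{1,2}=P(X\mid m_1)/P(X\mid m_2)$, so the whole content is a single closed form for the evidence $P(X\mid m_k)=\int \mathcal{L}(X\mid p,m_k)\,P(p\mid m_k)\,dp$, from which the final ratio follows by taking the quotient for $k=1$ and $k=2$.

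First I would substitute the transition likelihood from Theorem \ref{TLM} in its product form $\mathcal{L}(X\mid p,m_k)=\prod_{i=0}^{2^{m_k}-1}(1-p_i)^{n_i^{\ldots-1}}(p_i)^{n_i^{\ldots}}$, where for brevity $p_i$ denotes the single free probability $p_i^{(2i+1)\bmod 2^{m_k}}$ attached to word $i$, the complementary outgoing transition having probability $1-p_i$ by the two-out-edge structure of the de Bruijn graph. The key observation is that this likelihood factorises completely across the $2^{m_k}$ words, and that assigning an independent Beta prior $P(p_i)\propto p_i^{\alpha_{i+1}-1}(1-p_i)^{\beta_{i+1}-1}$ to each makes $P(p\mid m_k)$ factorise as well. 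Consequently the integral over $[0,1]^{2^{m_k}}$ separates into a product of one-dimensional integrals, each of the form $\int_0^1 p_i^{n_i^{\ldots}+\alpha_{i+1}-1}(1-p_i)^{n_i^{\ldots-1}+\beta_{i+1}-1}\,dp_i$.

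Each of these is a Beta integral, evaluating to $B(n_i^{\ldots}+\alpha_{i+1},\,n_i^{\ldots-1}+\beta_{i+1})=\Gamma(\cdot)\Gamma(\cdot)/\Gamma(\cdot)$, and collecting the factors reproduces exactly the product of Gamma ratios stated in the theorem. This is the same conjugacy computation already recorded in the model evidence of Theorem \ref{MEM}, so most of the analytic work can simply be cited from there; only the reinterpretation of that normalising constant as the Bayes' factor numerator or denominator is new. Taking the ratio of the two resulting products for $m_1$ and $m_2$ then yields $B_{1,2}$, completing the argument.

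The step I expect to require the most care is the index bookkeeping rather than any analytic difficulty: one must verify that the exponents generated by the decimal representation in Theorem \ref{TLM} align term by term with the Beta hyperparameters $\alpha_{i+1},\beta_{i+1}$, and that the transition counts $n_i^{\ldots}$ are recomputed afresh for each candidate word length, so that both the number of factors, $2^{m_k}$, and the counts themselves depend on $k$. A secondary point worth flagging is that $P(X\mid m_k)$ here is the \emph{conditional} evidence built from the transition likelihood of Theorem \ref{TLM}, which omits the stationary initial-word factor $\pi^{m_k}$ appearing in the full joint of Theorem \ref{CBD1}; provided the same convention is used for both hypotheses, this omission is consistent and does not affect $B_{1,2}$.
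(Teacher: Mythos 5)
Your proposal is correct and takes essentially the same route as the paper: the paper's entire proof is the single line ``Follows from Theorem \ref{MEM}'', i.e.\ the Bayes' factor is just the ratio of the model evidences already obtained there by the factorised Beta--conjugacy (Beta-integral) computation that you spell out explicitly. Your additional care points --- recomputing the transition counts $n_{i}^{j}$ afresh for each candidate word length $m_{k}$, and noting that the evidence is built from the transition likelihood (omitting the stationary initial-word factor) consistently for both hypotheses --- are sound elaborations of that same argument rather than a different approach.
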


\begin{proof}
Follows from Theorem \ref{MEM}
\end{proof}

\subsection{Examples}

The following examples illustrate the method of inference so that given a sequence of binary random variables,  the most likely de Bruijn process can be estimated. This includes both the word length $m$, and the transition probabilities, $p_{i}^{j}$. 

The first sequence is an anti-clustered alternating sequence given by the top panel in Figure \ref{Samp2I}. This sequence is of length $n=200$ and was generated using an $m=2$ de Bruijn process with transition probabilities: $\{ p_{00}^{01}, p_{01}^{11}, p_{10}^{01}, p_{11}^{11} \} = \{ 0.9, 0.25, 0.75, 0.1\}$. We begin by trying to estimate the word length using the methods of Bayes' factors with the model evidence stated in Theorem \ref{MEM2}. The model evidence for the sequence is calculated for each possible de Bruijn process with word lengths, $m = 1, ..., 10$, for comparison.  We do not assume any prior knowledge, so let each $\alpha = \beta = 1$ for the equivalence of a uniform prior. After calculating the Bayes' factor ratio for each pair of proposed models, we can conclude that the sequence was most likely generated with a length $m=2$ de Bruijn process.

\begin{figure}[ht]
\centering
\includegraphics[scale=0.5]{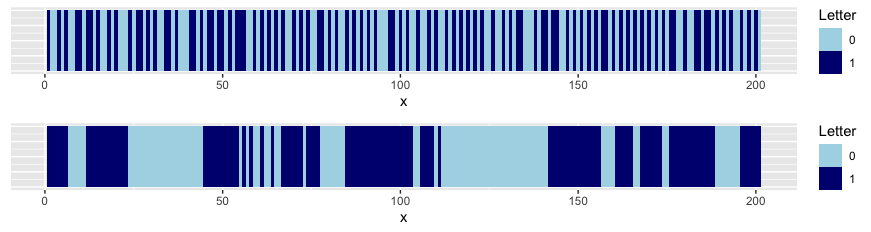}
\caption{Samples from length $m=2$ (top) and length $m=3$ (bottom) de Bruijn processes with letters 0 and 1. The transition probabilities are: $\{ p_{00}^{01}, p_{01}^{11}, p_{10}^{01}, p_{11}^{11} \} = \{ 0.9, 0.25, 0.75, 0.1 \}$ and $\{p_{000}^{001}, p_{001}^{011}, p_{010}^{101}, p_{011}^{111}, p_{100}^{001}, p_{101}^{011}, p_{110}^{101}, p_{111}^{111} \} = \{0.1, 0.7, 0.5, 0.8, 0.2, 0.5, 0.3, 0.9 \}$ respectively.}
\label{Samp2I}
\end{figure}

The left plot in Figure \ref{Hist2I} shows a histogram of the estimated word lengths for the $m=2$ de Bruijn process. We generated $1000$ sequences from the same de Bruijn process used to generate the top sequence in Figure \ref{Samp2I}, and used our method of inference to estimate the word length used for each sequence. The histogram shows that nearly every sequence was estimated to be generated from a length $m=2$ de Bruijn process. 

\begin{figure}[ht]
\centering
\includegraphics[scale=0.45]{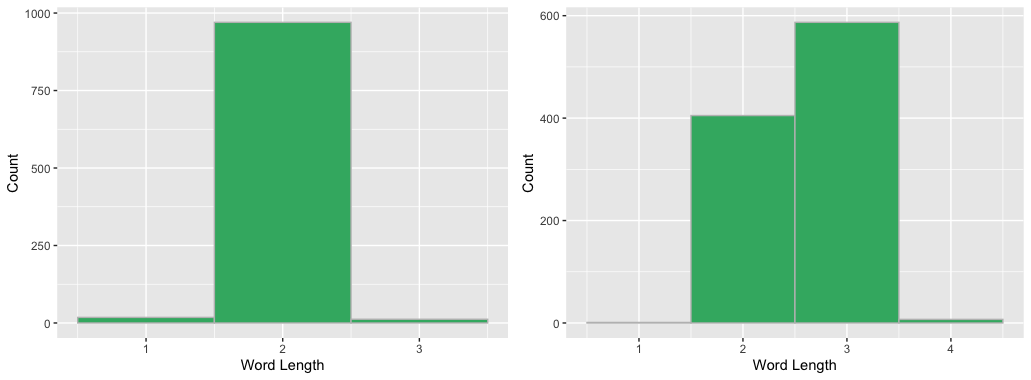}
\caption{Histograms of estimated word lengths from 1000 sequences generated from the $m=2$ (left) and $m=3$ (right) examples in Figure \ref{Samp2I}.}
\label{Hist2I}
\end{figure}

Finally, given the word length was estimated to be $m=2$, we then estimated the given transition probabilities using a simple Metropolis Hastings MCMC approach. Using the likelihood of the sequence given in Lemma \ref{TL2} and non-informative priors, the estimated parameters are given in Table \ref{TEInf} (left) along with $95\%$ confidence intervals. All estimates are close to the true values, where the true values lie within the confidence intervals for all parameters. On average, the true values and the expected values differ by $0.018$ indicating a small level of uncertainty.

The estimates of the parameters improve greatly with the increased length of the sequence. Table \ref{TEInf2} shows the effects of altering the length of the sequence. Using the same de Bruijn process with transition probabilities $\{p_{00}^{00}, p_{01}^{11}, p_{10}^{01}, p_{11}^{11}\} = \{0.9, 0.25, 0.75, 0.1\}$, we simulated $100$ sequences for lengths $n=50$, $n=100$, $n=200$ and $n=500$. The transition probabilities are estimated for each of these sequences and Table \ref{TEInf2} gives the average estimate along with a $95\%$ interval. Although the average estimate does not change much, it is clear that the interval of possible values reduces as the length of the sequence increases.

\begin{table}[h!]
\centering
\begin{tabular}{||c | c c ||} 
\hline
$p_{i}^{j}$ & True value & Estimate [$95\%$ CI] \\ [0.5ex]
\hline\hline
$p_{00}^{01}$ & $0.9$ & $0.932 \hspace{0.3cm} [0.883,0.969]$ \\ 
$p_{01}^{11}$ & $0.25$ & $0.245 \hspace{0.3cm} [0.206,0.286]$ \\
$p_{10}^{01}$ & $0.75$ & $0.737 \hspace{0.3cm} [0.696,0.775]$ \\
$p_{11}^{11}$ & $0.1$ & $0.079 \hspace{0.3cm} [0.024,0.136]$ \\
& & \\
& & \\
& & \\
& & \\ [1ex] 
\hline
\end{tabular}
\quad
\begin{tabular}{||c | c c ||} 
\hline
$p_{i}^{j}$ & True value & Estimate [$95\%$ CI] \\ [0.5ex]
\hline\hline
$p_{000}^{001}$ & $0.1$ & $0.132 \hspace{0.3cm} [0.091,0.166]$ \\ 
$p_{001}^{011}$ & $0.7$ & $0.691 \hspace{0.3cm} [0.582,0.793]$ \\
$p_{010}^{101}$ & $0.5$ & $0.516 \hspace{0.3cm} [0.383,0.650]$ \\
$p_{011}^{111}$ & $0.8$ & $0.737 \hspace{0.3cm} [0.683,0.844]$ \\
$p_{100}^{001}$ & $0.2$ & $0.228 \hspace{0.3cm} [0.139,0.323]$ \\ 
$p_{101}^{011}$ & $0.5$ & $0.504 \hspace{0.3cm} [0.382,0.589]$ \\
$p_{110}^{101}$ & $0.3$ & $0.393 \hspace{0.3cm} [0.276,0.499]$ \\
$p_{111}^{111}$ & $0.9$ & $0.894 \hspace{0.3cm} [0.868,0.918]$ \\ [1ex] 
\hline
\end{tabular}
\caption{Tables to show the estimates of the transition probabilities for the sequences in Figure \ref{Samp2I} with transition probabilities $\{ p_{00}^{01}, p_{01}^{11}, p_{10}^{01}, p_{11}^{11} \} = \{ 0.9, 0.25, 0.75, 0.1 \}$ (left) and $\{p_{000}^{001}, p_{001}^{011}, p_{010}^{101}, p_{011}^{111}, p_{100}^{001}, p_{101}^{011}, p_{110}^{101}, p_{111}^{111} \} = \{0.1, 0.7, 0.5, 0.8, 0.2, 0.5, 0.3, 0.9 \}$ (right). The true value is given along with the estimate and $95\%$ confidence interval.}
\label{TEInf}
\end{table}

\begin{table}[h!]
\centering
\begin{tabular}{||c | c c c c ||} 
\hline
$p_{i}^{j}$ & $n=50$ & $n=100$ & $n=200$ & $n=500$ \\ [0.5ex] 
\hline\hline
$p_{00}^{01}=0.9$ & $0.905 \hspace{0.3cm} [0.687,0.999]$ & $0.902 \hspace{0.3cm} [0.768,0.999]$ & $0.902 \hspace{0.3cm} [0.827,0.998]$ & $0.901 \hspace{0.3cm} [0.854,0.954]$ \\ 
$p_{01}^{11}=0.25$ & $0.262 \hspace{0.3cm} [0.098,0.424]$ & $0.249 \hspace{0.3cm} [0.149,0.364]$ & $0.251 \hspace{0.3cm} [0.176,0.315]$ & $0.249 \hspace{0.3cm} [0.203,0.287]$ \\
$p_{10}^{01}=0.75$ & $0.745 \hspace{0.3cm} [0.553,0.905]$ & $0.753 \hspace{0.3cm} [0.641,0.848]$ & $0.757 \hspace{0.3cm} [0.697,0.813]$ & $0.754 \hspace{0.3cm} [0.704,0.787]$ \\
$p_{11}^{11}=0.1$ & $0.082 \hspace{0.3cm} [0.001,0.287]$ & $0.082 \hspace{0.3cm} [0.001,0.222]$ & $0.080 \hspace{0.3cm} [0.001,0.178]$ & $0.102 \hspace{0.3cm} [0.057,0.156]$ \\ [1ex] 
\hline
\end{tabular}
\caption{Table showing the effects of altering the lengths of sequences on estimating transition probabilities. For each length ($n=50$, $n=100$, $n=200$, $n=500$), 100 sequences are generated from the $m=2$ de Bruijn process with transition probabilities $\{ p_{00}^{01}, p_{01}^{11}, p_{10}^{01}, p_{11}^{11} \} = \{ 0.9, 0.25, 0.75, 0.1 \}$ and the parameters are estimated. The average estimates along with $95\%$ intervals are given.}
\label{TEInf2}
\end{table}

The second example (bottom panel in Figure \ref{Samp2I}) is a length $n=200$ sequence designed such that there are large clustering blocks of $0$'s and $1$'s, with independent Bernoulli patterns occurring occasionally. It is generated using a length $m=3$ de Bruijn process with transition probabilities: $\{ p_{000}^{001}, p_{001}^{011}, p_{010}^{101}, p_{011}^{111}, p_{100}^{001}, p_{101}^{011}, p_{110}^{101}, p_{111}^{111} \} = \{ 0.1, 0.7, 0.5, 0.8, 0.2, 0.5, 0.3, 0.9 \}$.

As for the previous example, we begin by estimating the de Bruijn word most likely used to generate the sequence using Bayes' factors. This was estimated to be $m=3$. Although estimated correctly, by observing the right histogram in Figure \ref{Hist2I}, we can see that this isn't always the case for sequences generated from this de Bruijn process. After simulating $1000$ sequences and estimating the word length for each, almost $60\%$ are estimated to be $m=3$, but just over $40\%$ are actually estimated to be $m=2$. This is because some of the sequences generated have similar correlation structures to sequences generated with an $m=2$ de Bruijn process. The extra transition probabilities from the $m=3$ de Bruijn process have little effect and it is likely that these sequences could have been generated with transition probabilities close to $\{ p_{00}^{01}, p_{01}^{11}, p_{10}^{01}, p_{11}^{11} \} = \{ 0.1, 0.65, 0.35, 0.9\}$.

Finally, the transition probabilities (given $m=3$) are estimated and given in Table \ref{TEInf} (right) along with $95\%$ confidence intervals. All estimates are reasonably accurate, where on average, the estimates and true value differ by $0.020$. All true values lie within the $95\%$ confidence intervals.

\section{Application: The Boat Race} \label{App}

As a real world example, we will be applying the de Bruijn process inference to the annual boat race between Cambridge university and Oxford university. The race was first held in 1829 and has continued every year until present with a few missing years (partly due to the first and second World Wars and COVID-19). There are 166 data points in total where Oxford has won the race 80 times, Cambridge has won 85 times, and they have drawn once (1877). We also note that two races occurred in 1849, and so for simplicity reasons we chose to ignore the second recorded race (where Oxford won) and the race where the two universities drew.

To model the data using the de Bruijn process, let $0$ represent the years that Oxford won the race and let $1$ represent the years that Cambridge won the race. Hence $\pi(\rm{Oxford}) = \pi(0) = 79/164 = 0.482$ and $\pi(\rm{Cambridge}) = \pi(1) = 85/164 = 0.518$. The data is shown in Figure \ref{boat1} (where dark blue represents Oxford and light blue represents Cambridge).

\begin{figure}[ht]
\centering
\includegraphics[scale=0.25]{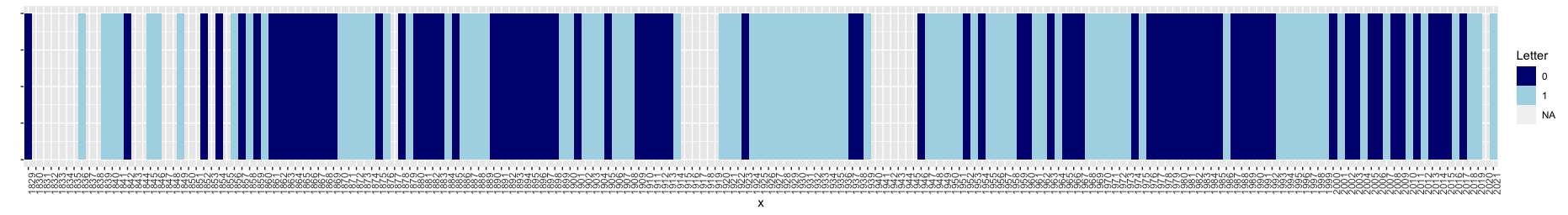}
\caption{Data for the annual boat race between Cambridge and Oxford universities. Races where Oxford have won are represented by a $0$ and shown in dark blue. Races where Cambridge have won are represented by a $1$ and are shown in light blue.}
\label{boat1}
\end{figure}

As before, we begin by estimating the word length for the estimated de Bruijn process. This is done using the Bayes' factor method explained in Section \ref{Inf} and is best estimated to be $m=2$. An explanation for dealing with the missing values is given in the Appendix. Since the Bayes' factor method compares each pair of models in turn, we noticed that there was little difference between the $m=2$ and $m=3$ models, hence $m=3$ would still likely give valid results. Given both of these possible word lengths, the associated transition probabilities are estimated using Bayesian methods. These are given in Table \ref{btable2} along with $95\%$ credible intervals.

\begin{table}[h!]
\centering
\begin{tabular}{||c | c c ||} 
\hline
$p_{i}^{j}$ & Estimate & $95\%$ CI \\ [0.5ex]
\hline\hline
$p_{00}^{01}$ & $0.283$ & $[0.175, 0.386]$ \\ 
$p_{01}^{11}$ & $0.462$ & $[0.327, 0.601]$ \\
$p_{10}^{01}$ & $0.519$ & $[0.372, 0.652]$ \\
$p_{11}^{11}$ & $0.723$ & $[0.618, 0.813]$ \\
& & \\
& & \\
& & \\
& & \\ [1ex] 
\hline
\end{tabular}
\quad
\begin{tabular}{||c | c c ||} 
\hline
$p_{i}^{j}$ & Estimate & $95\%$ CI \\ [0.5ex]
\hline\hline
$p_{000}^{001}$ & $0.244$ & $[0.142, 0.364]$ \\ 
$p_{001}^{011}$ & $0.458$ & $[0.237, 0.686]$ \\
$p_{010}^{101}$ & $0.362$ & $[0.175, 0.551]$ \\
$p_{011}^{111}$ & $0.817$ & $[0.625, 0.955]$ \\
$p_{100}^{001}$ & $0.340$ & $[0.188, 0.594]$ \\ 
$p_{101}^{011}$ & $0.467$ & $[0.251, 0.691]$ \\
$p_{110}^{101}$ & $0.671$ & $[0.444, 0.833]$ \\
$p_{111}^{111}$ & $0.680$ & $[0.549, 0.789]$ \\ [1ex] 
\hline
\end{tabular}
\caption{Table to show the estimated transition probabilities for the boat race data given in Figure \ref{boat1} when the word length is given to be either $m=2$ (left) or $m=3$ (right). Estimates are given alongside $95\%$ credible intervals.}
\label{btable2}
\end{table}

Initially considering the case when $m=2$, the marginal probabilities for the letters and words are given as $\pi(0) = 0.482$, $\pi(1) = 0.518$ and $\{\pi(00), \pi(01), \pi(10), \pi(11)\} = \{0.303, 0.191, 0.184, 0.322\}$. So far Cambridge have won more races than Oxford and there were also more times that Cambridge won consecutively. From the transition probability estimates, there is a $46.2\%$ chance for Cambridge to win again if they only won the previous race. This increases to a $72.3\%$ chance for Cambridge to win again if they have already won the past two races. Equivalently for Oxford, there is a $48.1\%$ chance for them to win again if they only won the previous race. But this again increases to a $71.7\%$ chance for Oxford to win again if they have already won the past two races.

We can hence conclude that both universities increase their chances of winning the next race if they have won previous races consecutively. This is slightly higher for Cambridge (increase of $0.6\%$ compared to Oxford when considering the two previous races) and is likely to be due to having overlapping team members in these years or having more confidence to win again if they have done previously.

If the sequence of races is $01$, then there is a $53.8\%$ chance for Oxford to win the next race. Alternatively if the sequence is $10$, then there is a $51.9\%$ chance for Cambridge to win the next race. There is almost a $50\%$ chance for either university to win the next race if the opposite university won the last race. Oxford have a slightly higher chance of winning, implying that they may have a larger drive to win again if Cambridge won the last race.
 
If we let $m=3$, the marginal probabilities for the letters and words are instead estimated to be $\pi(0) = 0.482$, $\pi(1) = 0.518$ and $\{\pi(000), \pi(001), \pi(010), \pi(011), \pi(100), \pi(101), \pi(110), \pi(111)\} = \{0.226, 0.089, 0.096, 0.082, 0.089, 0.096, 0.089, 0.233\}$.

Given $m=3$, there is now a $81.7\%$ chance for Cambridge to win the next race if they have won the last two races. This falls to a $68.0\%$ chance of winning if they have won the last three races. Hence, there is a much higher chance for Cambridge to win consecutive races, but the increased chance in winning reduces when the number of wins in a row exceeds two. Again, this is likely to be due to the team members taking part, winning tactics being used or previous wins providing increased confidence. The individual rowers are fairly likely to take part in the race for two years in a row, but less likely to take part in further races due to finishing their studies at the university. 

Oxford now only has a $66.0\%$ chance to win the next race if they have won the last two races, which now increases to a $75.6\%$ chance to win the race if they have won the last three races. Unlike Cambridge, Oxford increases their chance of winning the more races they win in row, indicating that they tend to value past tactics or confidence from winning over individual team members.

We observe that as well as having a higher chance of a longer streak of winning, Oxford also tends to perform better when neither university has built up a run in previous races. When the sequences of races is $101$, the probability of Oxford winning is $53.3\%$, whilst when the sequence is $010$, Cambridge has a $36.2\%$ chance of winning the next race. This is similar when we just consider the previous race. If Cambridge won the previous race then they only have a $46.3\%$ chance of winning the next race (calculated using law of total probability), compared with if Oxford won the previous race then Oxford have a $47.8\%$ chance of winning again.

We can also use the de Bruijn process to predict the results of the 2022 boat race. If $m=2$, then there is a $59.7\%$ ($p_{01}^{11} \pi(0) + p_{11}^{11} \pi(1) = 0.597$) chance of Cambridge winning the race and a $40.3\%$ of Oxford winning the race in 2022. If $m=3$, then there is a $57.7\%$ ($p_{101}^{011} \pi(0) + p_{111}^{111} \pi(1) = 0.577$) chance of Cambridge winning the race and a $42.3\%$ chance of Oxford winning the race in 2022.  Oxford won the race in 2022.

\section{Discussion} \label{FW}

The de Bruijn process was first introduced in \cite{Kimpton2022} where it was shown that correlated sequences of binary random variables could be modelled using a de Bruijn process.  In this paper, we have extended these ideas further to include both a formal definition for the joint distribution of many correlated Bernoulli trials using the de Bruijn process, as well as a method of inference. After establishing a method of inference, we applied our method to an application; the Oxford and Cambridge university boat race where we have used previous data to predict the results for the 2022 race.

Given a sequence of correlated binary data, we have shown that we can fit a de Bruijn process by estimating both the de Bruijn word length and associated transition probabilities. Since the word length can only take integer values, we have chosen to estimates the best word length using Bayes' factors.  We use Bayes' factors to test whether a sequence was most likely generated using a de Bruijn process with word length $m_{1}$ or a de Bruijn process with word length $m_{2}$. By comparing all possible models with words length $1, \ldots, 10$, we can then select the one that performs the best across all comparisons.  Once we have estimated the word length, we can then estimate the associated transition probabilities.  

The first obvious extension to the 1d de Bruijn process is to extend it to higher dimensions. This is a difficult problem due to the distinct direction of de Bruijn graphs which defines the correlation between past and future successes. For example, in two dimensions it is clear that it is not easy to enforce a direction onto a two dimensional grid.  Therefore, we also propose a non-directional de Bruijn process to remove any forced direction, but maintain the word structure. The de Bruijn word structure is important since it allows us to alter and control the spread of correlation, however, specifically in a 2d grid, letters are dependent on more letters in their neighbourhood than just a small selection in a set direction. Although this is a hard problem, we initially aim to change the form of the de Bruijn word further to see if the word structure can exist in a non-directional framework.

\bibliographystyle{abbrvnat}
\bibliography{DBP2bib2}

\newpage

\section*{Appendix}

\section*{Theorem 2.1 (Correlated Bernoulli Distribution $n \ge m$)}
For correlated binary random sequence, $X = \{X_{1}, X_{2}, \ldots, X_{n}\}$, where $x= \{x_{1}, x_{2}, \ldots, x_{n}\}$ is a realisation from $X$, the joint probability density is given as follows:
\begin{equation}
\begin{split}
\pi^{n}(x_{1}&, x_{2}, \ldots, x_{n}) \\
&= \pi^{n}(0...0)^{(1-x_{1})...(1-x_{n})} \times \pi^{n}(0...01)^{(1-x_{1})...(1-x_{n-1})(x_{n})} \times \ldots \\
&\hspace{1cm} \times \pi^{n}(1...10)^{(x_{1})...(x_{n-1})(1-x_{n})} \times \pi^{n}(1...1)^{(x_{1})...(x_{n})} \\
&= \prod_{i=0}^{2^{n}-1} \bigg( \pi^{n}(i) \bigg)^{\prod_{j=1}^{n} \bigg[ \big( x_{j} \big) ^{\big[\frac{1}{2^{n-j}} (i - (i \hspace{0.1cm} \text{mod} 2^{n-j})) \big] \text{mod} 2 } \big( 1 - x_{j} \big) ^{\big[\frac{1}{2^{n-j}} ((2^{n}-i-1) - ((2^{n}-i-1) \hspace{0.1cm} \text{mod} 2^{n-j})) \big] \text{mod} 2 } \bigg]}
\end{split}
\end{equation}
\rm{where}
\begin{equation}
\begin{split}
\pi^{n}(i) &= \sum_{j=0}^{2^{m}-1} \prod_{k=0}^{m-1} \pi^{m}(j) \hspace{0.2cm} p^{2^{k+1} \big( j \hspace{0.1cm}  \text{mod} 2^{m-k-1}\big) + \big[ \frac{1}{2^{n-k-1}} (i - (i \hspace{0.1cm} \text{mod} 2^{n-k-1})) \big] \text{mod} 2^{m} } _{2^{k} \big( j \hspace{0.1cm}  \text{mod} 2^{m-k} \big) + \big[ \frac{1}{2^{n-k}} (i - (i \hspace{0.1cm} \text{mod} 2^{n-k})) \big] \text{mod} 2^{m} } \\
& \hspace{1cm} \times \prod_{s=m}^{n-1} p_{\big[ \frac{1}{2^{n-s}} (i - (i \hspace{0.1cm} \text{mod} 2^{n-s})) \big] \text{mod} 2^{m}} ^{\big[ \frac{1}{2^{n-s-1}} (i - (i \hspace{0.1cm} \text{mod} 2^{n-s-1})) \big] \text{mod} 2^{m} }
\end{split}
\end{equation}

\begin{proof}
Let $X$ be a sequence of binary random variables. If $X = X_{1}$, then its probability density function is given by:
\begin{equation}
P(X_{1}=x_{1}) = \pi^{1}(0)^{x_{1}} \times \pi^{1}(1)^{1-x_{1}}.
\end{equation}
If $X = \{X_{1}, X_{2}\}$, then the probability density function is given by:
\begin{equation}
P(X_{1}=x_{1}, X_{2}=x_{2}) = \pi^{2}(00)^{(1-x_{1})(1-x_{2})} \times \pi^{2}(01)^{(1-x_{1})(x_{2})} \times \pi^{2}(10)^{(x_{1})(1-x_{2})} \times \pi^{2}(11)^{(x_{1})(x_{2})}.
\end{equation}
Therefore, for $X = \{X_{1}, X_{2}, \ldots, X_{n}\}$, the probability density function is given by:
\begin{equation}
\begin{split}
\pi^{n}(x_{1}, x_{2}, \ldots, x_{n}) &= \pi^{n}(0...0)^{(1-x_{1})...(1-x_{n})} \times \pi^{n}(0...01)^{(1-x_{1})...(1-x_{n-1})(x_{n})} \times \ldots \\
&\hspace{1cm} \times \pi^{n}(1...10)^{(x_{1})...(x_{n-1})(1-x_{n})} \times \pi^{n}(1...1)^{(x_{1})...(x_{n})} 
\end{split}
\end{equation}
To generate a general expression for $\pi^{n}(x_{1}, x_{2}, \ldots, x_{n})$ we can write the de Bruijn words in terms of their numerical representations. The function is given by the product of the marginal probabilities of all possible length $n$ binary sequences. Hence, we have $\prod_{i=0}^{2^{n}-1} \pi^{n}(i)$. 

This product is raised to the power of a combination of either $x_{j}$ or $(1-x_{j})$ for $i = 1, \ldots, n$, depending on the values of each letter in the sequence. We require $x_{j}$ to occur when $x_{j}=1$ and $1-x_{j}$ to occur when $x_{j}=0$. To achieve this, we have the product of each $x_{j}$ and $(1-x_{j})$ for $j = 1, \ldots, n$, with each term raised to the power $0$ or $1$ according to whether it is present or not.

For each $j$, we require $x_{j}$ to be raised to the power $0$ whenever a $0$ is present in the sequence $i$, and a $1$ whenever a $1$ is present in the sequence $i$. This is equivalent to the logical order of binary sequences of length $n$. Therefore for $j=1$ and $i=0, \ldots, 2^{n-1}-1$, $x_{j}$ should be raised to the power $0$. For $j=1$ and $2^{n-1}, \ldots, 2^{n}-1$, $x_{j}$ should be raised to the power $1$. Then for $j=2$ with $i=0, \ldots, 2^{n-2}-1$ and $2^{n-1}, \ldots, 2^{n-1}+2^{n-2}-1$, $x_{j}$ should be raised to the power $0$. For $j=2$ with $2^{n-2}, \ldots, 2^{n-1}-1$ and $2^{n-1}+2^{n-2}, \ldots, 2^{n}-1$, $x_{j}$ should be raised to the power $1$. This pattern continues until for $j=n$, the power of $x_{j}$ alternates between $0$ and $1$ for each value of $i$. This pattern has general form:
\begin{equation}
\prod_{i=0}^{2^{n}-1} \bigg( \pi^{n}(i) \bigg)^{\prod_{j=1}^{n} \bigg[ \big( x_{j} \big) ^{\big[\frac{1}{2^{n-j}} (i - (i \hspace{0.1cm} \text{mod} 2^{n-j})) \big] \text{mod} 2 } \bigg]}.
\end{equation}

For each $j$, $(1-x_{j})$ must be raised to the opposite power as $x_{j}$ (i.e. if $x_{j}$ is raised to the power $0$, then $(1-x_{j})$ is raised to the power $1$), and so has exactly the same binary pattern. This then gives:
\begin{equation}
\begin{split}
P_{\textbf{X}}(x_{1}&, x_{2}, \ldots, x_{n}) \\
&= \prod_{i=0}^{2^{n}-1} \bigg( \pi^{n}(i) \bigg)^{\prod_{j=1}^{n} \bigg[ \big( x_{j} \big) ^{\big[\frac{1}{2^{n-j}} (i - (i \hspace{0.1cm} \text{mod} 2^{n-j})) \big] \text{mod} 2 } \big( 1 - x_{j} \big) ^{\big[\frac{1}{2^{n-j}} ((2^{n}-i-1) - ((2^{n}-i-1) \hspace{0.1cm} \text{mod} 2^{n-j})) \big] \text{mod} 2 } \bigg]}
\end{split}
\end{equation}

$\pi^{n}(i)$ gives the probability of the length $n$ sequence represented by $i = 0, \ldots, 2^{n}-1$. Using the law of total probability, $\pi^{n}(i)$ can be written in terms of transition probabilities and length $m$ marginal word probabilities:
\begin{equation}
\pi^{n}(i) = \sum_{j=0}^{2^{m}-1} P(i | j) \pi^{m}(j),
\end{equation}
where $j$ represents all possible length $m$ sequences of $0$'s and $1$'s in the numerical representation of the binary words. A general expression for $P(i | j)$, will be a product of $n$ transition probabilities in terms of length $m$ words. The starting word will be the sequence given by $j$, then with each transition, one more letter from the sequence $i$ will included. 

First consider the case when $\pi^{n}(i) = \pi^{n}(0 \ldots 0)$. Using the equation above, the product of transition probabilities will each start with one of the possible $2^{m}$ starting words for the sequence. This word then transitions to a word of the form $*0$, where $*$ is of length $m-1$. The first $m$ transitions will contain letters from the original starting word. The next $n-m$ transitions will then just be of the form $p_{0 \ldots 0}^{0 \ldots 0}$ where a $0$ is added to the sequence at each transition. 

For the first $m-1$ transitions, we start off with a word of the form $*$, then transition to a word of the form $*^{'}0$ where $*^{'}$ is a sequence of length $m-1$. We then transition from $*^{'}0$ to a word of the form $*^{''}00$, where $*^{''}$ is now a length $m-2$ sequence with the first two letters removed from $*$. This pattern continues until the $m^{\text{th}}$ transition which is to a word of the form $0 \ldots 0$. With each transition, there are fewer possible words to transition to and so a cycle is formed. For the $k^{\text{th}}$ transition, there are $2^{m-k}$ possible words to transition to, each of which depends on the word before. Since the transitions cause the word to increase by a factor of two with each $k$, we have the following expression for the first $m-1$ transitions:
\begin{equation}
\pi^{n}(0 \ldots 0) = \sum_{j=0}^{2^{m}-1} \prod_{k=0}^{m-1} \pi^{m}(j) \hspace{0.2cm} p^{2^{k+1} \big( j \hspace{0.1cm}  \text{mod} 2^{m-k-1}\big)} _{2^{k} \big( j \hspace{0.1cm}  \text{mod} 2^{m-k} \big)}
\end{equation}

To extend this to any sequence $i$, each transition now includes an additional letter from $i$. Since the sequences represented by $i$ are all possible length $n$ binary sequences, we have a similar pattern forming to the powers of $x_{j}$ above. For $k=0$, the starting word for the transition is just the initial starting words. Then for $k=1$, the starting word has an additional $0$ for $i=0, \ldots, 2^{n-1}-1$ and an additional $1$ for $i=2^{n-1}, \ldots, 2^{n-1}-1$. This again continues until the last transition alternates between adding a $0$ and a $1$ for $i=0, \ldots, 2^{n}-1$. Since the last $n-m$ transitions simply consist of adding in the letters from the sequence $i$ and the next starting word is the same as the previous word transitioned to, we have the following general expression for $\pi^{n}(i)$:
\begin{equation}
\begin{split}
\pi^{n}(i) &= \sum_{j=0}^{2^{m}-1} \prod_{k=0}^{m-1} \pi^{m}(j) \hspace{0.2cm} p^{2^{k+1} \big( j \hspace{0.1cm}  \text{mod} 2^{m-k-1}\big) + \big[ \frac{1}{2^{n-k-1}} (i - (i \hspace{0.1cm} \text{mod} 2^{n-k-1})) \big] \text{mod} 2^{m} } _{2^{k} \big( j \hspace{0.1cm}  \text{mod} 2^{m-k} \big) + \big[ \frac{1}{2^{n-k}} (i - (i \hspace{0.1cm} \text{mod} 2^{n-k})) \big] \text{mod} 2^{m} } \\
& \hspace{1cm} \times \prod_{s=m}^{n-1} p_{\big[ \frac{1}{2^{n-s}} (i - (i \hspace{0.1cm} \text{mod} 2^{n-s})) \big] \text{mod} 2^{m}} ^{\big[ \frac{1}{2^{n-s-1}} (i - (i \hspace{0.1cm} \text{mod} 2^{n-s-1})) \big] \text{mod} 2^{m} }
\end{split}
\end{equation}
\end{proof}

\section*{Lemma 3.1 (Transition Likelihood, $m=2$)}
For a given sequence, $x = \{x_{1}, x_{2}, \ldots, x_{n}\}$, the joint likelihood in terms of the length $m=2$ de Bruijn transition probabilities $p_{i}^{j}$ is given as follows:
\begin{equation}
\begin{split}
\mathcal{L}(X|p) &= (p_{00}^{00})^{\sum_{i=1}^{n-2} (1-x_{i})(1-x_{i+1})(1-x_{i+2})} \times (p_{00}^{01})^{\sum_{i=1}^{n-2} (1-x_{i})(1-x_{i+1})(x_{i+2})} \\
& \hspace{1cm} \times (p_{01}^{10})^{\sum_{i=1}^{n-2} (1-x_{i})(x_{i+1})(1-x_{i+2})} \times (p_{01}^{11})^{\sum_{i=1}^{n-2} (1-x_{i})(x_{i+1})(x_{i+2})} \\
& \hspace{1cm} \times (p_{10}^{00})^{\sum_{i=1}^{n-2} (x_{i})(1-x_{i+1})(1-x_{i+2})} \times (p_{10}^{11})^{\sum_{i=1}^{n-2} (x_{i})(1-x_{i+1})(x_{i+2})} \\
& \hspace{1cm} \times (p_{11}^{10})^{\sum_{i=1}^{n-2} (x_{i})(x_{i+1})(1-x_{i+2})} \times (p_{11}^{11})^{\sum_{i=1}^{n-2} (x_{i})(x_{i+1})(x_{i+2})} \\
&= (p_{00}^{00})^{n_{00}^{00}} \hspace{0.1cm} (p_{00}^{01})^{n_{00}^{01}} \hspace{0.1cm} (p_{01}^{10})^{n_{01}^{10}} \hspace{0.1cm} (p_{01}^{11})^{n_{01}^{11}} \hspace{0.1cm} (p_{10}^{00})^{n_{10}^{00}} \hspace{0.1cm} (p_{10}^{01})^{n_{10}^{01}} \hspace{0.1cm} (p_{11}^{10})^{n_{11}^{10}} \hspace{0.1cm} (p_{11}^{11})^{n_{11}^{11}} \\
&= (1-p_{00}^{01})^{n_{00}^{00}} \hspace{0.1cm} (p_{00}^{01})^{n_{00}^{01}} \hspace{0.1cm} (1-p_{01}^{11})^{n_{01}^{10}} \hspace{0.1cm} (p_{01}^{11})^{n_{01}^{11}} \hspace{0.1cm} (1-p_{10}^{01})^{n_{10}^{00}} \hspace{0.1cm} (p_{10}^{01})^{n_{10}^{01}} \hspace{0.1cm} \\
& \hspace{1cm} \times (1-p_{11}^{11})^{n_{11}^{10}} \hspace{0.1cm} (p_{11}^{11})^{n_{11}^{11}},
\end{split}
\end{equation}

\begin{proof}
Assume a sequence of letters, $x = \{ x_{1}, x_{2}, ..., x_{n} \}$, where $x_{i} \in [0,1]$ and the ordering is fixed. This sequence can be expressed in terms of its de Bruijn words such that $x = \{w_{1}, w_{2}, ..., w_{n-1}\}$, where $w_{i}$ are the de Bruijn words of length $m=2$. Consider the joint distribution of this sequence. Starting from $w_{1}$, the probability of transitioning to the next word is $p_{w_{1}}^{w_{2}}$. The probability of transitioning to the next following word is, $p_{w_{2}}^{w_{3}}$. This is continued until the transition $p_{w_{n-2}}^{w_{n-1}}$ is reached and produces the joint distribution, $\mathcal{L}(X|p) = p_{w_{1}}^{w_{2}} \times p_{w_{2}}^{w_{3}} \times ... \times p_{w_{n-2}}^{w_{n-1}}$. By collecting like terms for each possible transition probability the above result is given. 
\end{proof}

\section*{Theorem 3.2 (Transition Likelihood, $m \ge 1$)}
For a given sequence, $x = \{x_{1}, x_{2}, \ldots, x_{n}\}$, the joint likelihood in terms of the length $m$ de Bruijn transition probabilities $p_{i}^{j}$ is given as follows:
\begin{equation}
\begin{split}
\mathcal{L}(X|p) &= \left( p_{0 \ldots 0}^{0 \ldots 0} \right)^{\sum_{i=1}^{n-m} (1-x_{i}) \ldots (1-x_{i+m})} \times \left( p_{0 \ldots 00}^{0 \ldots 01} \right)^{\sum_{i=1}^{n-m} (1-x_{i}) \ldots (1-x_{i+m-1})(x_{i+m})} \\
& \hspace{1cm} \times \ldots \times \left( p_{1 \ldots 11}^{1 \ldots 10} \right)^{\sum_{i=1}^{n-m} (x_{i}) \ldots (x_{i+m-1})(1-x_{i+m})} \times \left( p_{1 \ldots 1}^{1 \ldots 1} \right)^{\sum_{i=1}^{n-m} (x_{i}) \ldots (x_{i+m})} \\
&= \left( p_{0 \ldots 0}^{0 \ldots 0} \right)^{n_{0 \ldots 0}^{0 \ldots 0}} \times \left( p_{0 \ldots 00}^{0 \ldots 01} \right)^{n_{0 \ldots 00}^{0 \ldots 01}} \times \ldots \times \left( p_{1 \ldots 11}^{1 \ldots 10} \right)^{n_{1 \ldots 11}^{1 \ldots 10}} \times \left( p_{1 \ldots 1}^{1 \ldots 1} \right)^{n_{1 \ldots 1}^{1 \ldots 1}} \\
&= \prod_{i=0}^{2^{m+1}-1} \left(p_{\frac{1}{2} (i - (i \hspace{0.1cm} \text{mod } 2))}^{i \hspace{0.1cm} \text{mod } 2^{m}} \right) ^{n_{\frac{1}{2} (i - (i \hspace{0.1cm} \text{mod } 2))}^{i \hspace{0.1cm} \text{mod } 2^{m}}}\\
&= \prod_{i=0}^{2^{m}-1} \left( 1 - p_{i}^{(2i+1) \hspace{0.1cm} \text{mod } 2^{m}} \right)^{n_{i}^{((2i+1) \hspace{0.1cm} \text{mod } 2^{m}) - 1}} \left( p_{i}^{(2i+1) \hspace{0.1cm} \text{mod } 2^{m}} \right) ^{n_{i}^{((2i+1) \hspace{0.1cm} \text{mod } 2^{m})}}.
\end{split}
\end{equation}

\begin{proof}
Assume a sequence of letters, $x = \{x_{1}, x_{2}, ..., x_{n}\}$, where $x_{i} \in [0,1]$ and the ordering is fixed. This sequence can be written in terms of its de Bruijn words such that $x = w_{1}, w_{2}, ..., w_{n-1}$, where $w_{i}$ are the de Bruijn words of length $m$. Consider the joint distribution of this sequence. Starting from $w_{1}$, the probability of transitioning to the next word is $p_{w_{1}}^{w_{2}}$. The probability of transitioning to the next following word is, $p_{w_{2}}^{w_{3}}$. This continues until the transition $p_{w_{n-2}}^{w_{n-1}}$ which gives the joint distribution, $\mathcal{L}(X|p) = p_{w_{1}}^{w_{2}} \times p_{w_{2}}^{w_{3}} \times ... \times p_{w_{n-2}}^{w_{n-1}}$. Like terms are collected for each possible transition probability. 

There are $2^{m+1}$ possible transition probabilities since each word can be followed by either a $0$ or a $1$. Since rows in the transition matrix sum to one, the transition likelihood can be expressed in terms of $2^{m}$ parameters of the form $*1$. These are expressed as $2i + 1$ for $i \in (0:2^{m} -1)$ (using the numerical representation of the binary words). Since all possible words transition to a word of the form $*1$, where $*$ is of length $m-1$, the possible end transition words must occur twice and a repeated cycle is formed. This occurs every $2^{m}$ transitions, hence:
\begin{equation}
\mathcal{L}(X|p) = \prod_{i=0}^{2^{m}-1} \left( 1 - p_{i}^{((2i+1) \hspace{0.1cm} \text{mod } 2^{m})} \right)^{n_{i}^{((2i+1) \hspace{0.1cm} \text{mod } 2^{m}) - 1}} \left( p_{i}^{((2i+1) \hspace{0.1cm} \text{mod } 2^{m})} \right) ^{n_{i}^{((2i+1) \hspace{0.1cm} \text{mod } 2^{m})}}.
\end{equation}
\end{proof}

\section*{Theorem 3.3 (Fisher Information, $m \ge 1$)}
The Fisher information, $I(p_{k}) = -E\left[\frac{\partial^{2}\text{log}\mathcal{L}}{\partial {p_{k}}^{2}}\right]$, for each transition probability $p_{k}$ for and sequence $n$ and word length $m$ is given by:
\begin{equation}
\begin{split}
I(p_{k}) &= -E\left[\frac{\partial^{2}\text{log}\mathcal{L}}{\partial {p_{k}}^{2}}\right] \\
&= \frac{1}{p_{k}} \sum_{i=0}^{n-m-1} \sum_{j=0}^{2^{n-m-1}-1} \pi^{n} \left( 2^{m+1} j + 2^{i}k - \left( 2^{m+1} - 1 \right) \left( j \hspace{0.1cm} \text{mod} 2^{i} \right) \right)
\end{split}
\end{equation}

\begin{proof}
For transition probabilities, $p$, the Fisher information is given by:
\begin{equation}
I(p) = -E \left[ \frac{\partial^{2} \text{log}\mathcal{L}(p|x)}{\partial p^{2}} \right],
\end{equation}
where $\mathcal{L}(p|x)$ is given as
\begin{equation}
\begin{split}
\mathcal{L}(X|p) &= \left( p_{0 \ldots 0}^{0 \ldots 0} \right)^{\sum_{i=1}^{n-m} (1-x_{i}) \ldots (1-x_{i+m})} \times \left( p_{0 \ldots 00}^{0 \ldots 01} \right)^{\sum_{i=1}^{n-m} (1-x_{i}) \ldots (1-x_{i+m-1})(x_{i+m})} \\
& \hspace{1cm} \times \ldots \times \left( p_{1 \ldots 11}^{1 \ldots 10} \right)^{\sum_{i=1}^{n-m} (x_{i}) \ldots (x_{i+m-1})(1-x_{i+m})} \times \left( p_{1 \ldots 1}^{1 \ldots 1} \right)^{\sum_{i=1}^{n-m} (x_{i}) \ldots (x_{i+m})} \\
&= \prod_{i=0}^{2^{m+1}-1} \left(p_{\frac{1}{2} (i - (i \hspace{0.1cm} \text{mod } 2))}^{i \hspace{0.1cm} \text{mod } 2^{m}} \right) ^{n_{\frac{1}{2} (i - (i \hspace{0.1cm} \text{mod } 2))}^{i \hspace{0.1cm} \text{mod } 2^{m}}},
\end{split}
\end{equation}
from Theorem 4.2.

From $\mathcal{L}(p|x)$, $\text{log}\mathcal{L}$ is given as 
\begin{equation}
\begin{split}
\text{log}\mathcal{L}(p|x) &= \sum_{i=1}^{n-m} (1-x_{i}) \ldots (1-x_{i+m}) \text{log} \left(p_{0 \ldots 0}^{0 \ldots 0} \right) \\
& \hspace{0.5cm} + \sum_{i=1}^{n-m} (1-x_{i}) \ldots (1-x_{i+m-1})(x_{i+m}) \text{log} \left(p_{0 \ldots 0}^{0 \ldots 0} \right) \\
& \hspace{1cm} + \ldots + \sum_{i=1}^{n-m} (x_{i}) \ldots (x_{i+m}) \text{log} \left(p_{1 \ldots 1}^{1 \ldots 1} \right) \\
&= n_{0 \ldots 0}^{0 \ldots 0} \text{log} \left(p_{0 \ldots 0}^{0 \ldots 0} \right) + n_{0 \ldots 00}^{0 \ldots 01} \text{log} \left(p_{0 \ldots 0}^{0 \ldots 0} \right) + \ldots + n_{1 \ldots 1}^{1 \ldots 1} \text{log} \left(p_{1 \ldots 1}^{1 \ldots 1} \right) \\
&= \sum_{i=0}^{2^{m+1}-1} n_{\frac{1}{2} (i - (i \hspace{0.1cm} \text{mod } 2))}^{i \hspace{0.1cm} \text{mod } 2^{m}} \text{log} \left(p_{\frac{1}{2} (i - (i \hspace{0.1cm} \text{mod } 2))}^{i \hspace{0.1cm} \text{mod } 2^{m}} \right)
\end{split}
\end{equation}

Let $p_k$ be the $k^{\text{th}}$ transition probability from logical order, $\left\{ p_{0 \ldots 0}^{0, \ldots, 0}, p_{0 \ldots 00}^{0 \ldots 01}, \ldots, p_{1 \ldots 11}^{1 \ldots 10}, p_{1 \ldots 1}^{1 \ldots 1} \right\}$. Then $\frac{\partial^{2}\text{log}\mathcal{L}}{\partial {p_{k}}^{2}}$ for transition probability $p_{k}$ is given as the following:
\begin{equation}
\frac{\partial^{2}\text{log}\mathcal{L}}{\partial {p_{k}}^{2}} = -\frac{1}{p_{k}^{2}} \, n_{p_{k}},
\end{equation}
where $n$ can be written in terms of $x = \{x_{1}, \ldots, x_{n}\}$ as above. For example,
\begin{equation}
\frac{\partial^{2}\text{log}\mathcal{L}}{\partial {p_{0 \ldots 0}^{0 \ldots 0}}^{2}} = -\frac{1}{{p_{0 \ldots 0}^{0 \ldots 0}}^{2}} \sum_{i=1}^{n-m} (1-x_{i}) \ldots (1-x_{i+m}).
\end{equation}

An expectation with respect to $x$ is given as:
\begin{equation}
E[g(X_{1}, \ldots, X_{n})] = \sum_{x_{1}=0}^{x_{1}=1} \cdots \sum_{x_{n}=0}^{x_{n}=1} g(x_{1}, \ldots, x_{n}) P_{\textbf{X}}(x_{1}, \ldots, x_{n}),
\end{equation}
where $P_{\textbf{X}}(x_{1}, \ldots, x_{n})$ is as defined from Theorem 3.1.
For the above example, the expectation of the double integral is as follows:
\begin{equation}
\begin{split}
E\left[\frac{\partial^{2}\text{log}\mathcal{L}}{\partial {p_{0 \ldots 0}^{0 \ldots 0}}^{2}}\right] &= -\frac{1}{{p_{0 \ldots 0}^{0 \ldots 0}}^{2}} \Big( E[(1-x_{1}) \ldots (1-x_{m+1})] + E[(1-x_{2}) \ldots (1-x_{m+2})] \\
& \hspace{1cm} + \ldots + E[(1-x_{n-m}) \ldots (1-x_{n})] \Big) \\
&= -\frac{1}{{p_{0 \ldots 0}^{0 \ldots 0}}^{2}} \Bigg( \sum_{x_{1}=0}^{x_{1}=1} \cdots \sum_{x_{m+1}=0}^{x_{m+1}=1} (1-x_{1}) \ldots (1-x_{m+1}) P_{\textbf{X}}(x_{1}, \ldots, x_{n}) \\
& \hspace{1cm} + \ldots + \sum_{x_{n-m}=0}^{x_{n-m}=1} \cdots \sum_{x_{n}=0}^{x_{n}=1} (1-x_{n-m}) \ldots (1-x_{n}) P_{\textbf{X}}(x_{1}, \ldots, x_{n}) \Bigg) \\
&= -\frac{1}{{p_{0 \ldots 0}^{0 \ldots 0}}^{2}} \bigg( \pi^{n}(0 \ldots 0) + \pi^{n}(0 \ldots 01) + \ldots + \pi^{n}(1 \ldots 10 \ldots 0) \bigg).
\end{split}
\end{equation}
The last line from the expression above comes from substituting in each $x_{i}$ for $i = 1, \ldots, n$. This ensures that the result is the sum of all of the possible marginal probabilities of length $n$ sequences that contain the given length $m+1$ transition sequence.

We can then form a general expression for this sum. If $p_{k} = p_{0 \ldots 0}^{0 \ldots 0}$, then a sequence of length $m+1$ $0$'s must be contained in each marginal probability sequence.

For each sequence length $n$ there will be $n-m$ different summations for each combination of $m+1$ letters. There are $n-m$ places where the length $m+1$ transition sequence can occur in the full length $n$ sequence. Each of these then has $2^{n-m}$ terms as there are then $n-m$ letters remaining that can take any value and we must include all combinations. Hence overall, there will be $(n-m) \times (2^{n-m})$ terms in the general expression.

If the length $m+1$ sequence of $0$'s are in the far most right position in the sequence, then the remaining letters must start in the $m+1$ position from the right corresponding to $2^{m+1}$ in binary notation. All of the possible words for this combination consist of multiples of $2^{m+1}$ since as we increase the number of $1$'s in binary order to the sequence, we add an extra $2^{m+1}$ each time. This gives $\sum_{j=0}^{2^{n-m-1}} \pi^{n}(2^{m+1} j)$. 

If the $m+1$ sequence of $0$'s then moves one place to the left, the remaining letters now lie either side of the transition word. The $m+1$ position from the right now contains the first letter from the transition sequence and so $2^{m+1}$ must be removed from the decimal representation of the binary sequence. If the sequence moves further to the left, then more positions will contain the transition sequence that we then must remove accordingly. We only want to remove the corresponding amounts when a $1$ would occur in the sequence in place of the transition sequence. This means that a cycle is formed. Considering all the binary sequences of length $n$ where the binary position is indexed by $i=1,2,\ldots$ (from right to left), $1$'s occur in groups of length $2^{i}$. This corresponds to a cycle as follows: $\sum_{i=0}^{n-m-1} \sum_{j=0}^{2^{n-m-1}-1} \pi^{n} (2^{m+1} j - 2^{m+1} j \text{ mod } 2^{i})$.

Lastly, we must include the remaining letters that appear in the right hand positions as the transition sequence moves with increasing $i$. This occurs in the same cycle as above given by: $\sum_{i=0}^{n-m-1} \sum_{j=0}^{2^{n-m-1}-1} \pi^{n} (j \text{ mod } 2^{i})$. Putting each of these sections together gives the following general result:
\begin{equation}
\begin{split}
E\left[\frac{\partial^{2}\text{log}\mathcal{L}}{\partial {p_{0 \ldots 0}^{0 \ldots 0}}^{2}}\right] &= -\frac{1}{{p_{0 \ldots 0}^{0 \ldots 0}}^{2}} \sum_{i=0}^{n-m-1} \sum_{j=0}^{2^{n-m-1}-1} \pi^{n} (2^{m+1} j  + j \text{ mod } 2^{i} - 2^{m+1} j \text{ mod } 2^{i}) \\
&= -\frac{1}{{p_{0 \ldots 0}^{0 \ldots 0}}^{2}} \sum_{i=0}^{n-m-1} \sum_{j=0}^{2^{n-m-1}-1} \pi^{n} (2^{m+1} j - (2^{m+1}-1) \, j \text{ mod } 2^{i}).
\end{split}
\end{equation}

This is the general form when the transition is from a word of the form $0 \ldots 0$ to a word of the same form. To be applicable for any transition we need to include the additional letters. This just equates to adding in the correct numerical representation for the length $m+1$ transition sequence. If these are listed in order as above for $k$, this is given by $\sum_{i=0}^{n-m-1} \sum_{j=0}^{2^{n-m-1}-1} \pi^{n} ( 2^{i}k )$.

Therefore, putting all of these steps together the fisher information is given as the following:
\begin{equation}
\begin{split}
I(p_{k}) &= -E\left[\frac{\partial^{2}\text{log}\mathcal{L}}{\partial {p_{k}}^{2}}\right] \\
&= \frac{1}{p_{k}} \sum_{i=0}^{n-m-1} \sum_{j=0}^{2^{n-m-1}-1} \pi^{n} \left( 2^{m+1} j + 2^{i}k - \left( 2^{m+1} - 1 \right) \left( j \hspace{0.1cm} \text{mod} 2^{i} \right) \right)
\end{split}
\end{equation}
\end{proof}

\section*{Theorem 3.4 (Posterior Distribution for de Bruijn Probability Transitions, $m \ge 1$)}
Given Bayes' theorem, the posterior distribution of the de Bruijn transition probabilities is expressed below:
\begin{equation}
\begin{split}
P(p | X) & = \frac{\mathcal{L}(X|p,m) P(p|m)}{P(X)} \\
& = \frac{\mathcal{L}(X|p,m)P(p|m)}{\int \mathcal{L}(X|p,m)P(p|m) dp}
\end{split}
\end{equation}
\rm{where,}
\begin{equation}
\begin{split}
\mathcal{L}(X|p,m)P(p|m) &= \prod_{i=0}^{2^{m}-1} (1 - p_{i}^{((2i+1) \hspace{0.1cm} \text{mod } 2^{m})})^{n_{i}^{((2i+1) \hspace{0.1cm} \text{mod } 2^{m})-1 }+\beta_{i+1}-1} \\
& \hspace{2cm} \times (p_{i}^{((2i+1) \hspace{0.1cm} \text{mod } 2^{m})})^{n_{i}^{((2i+1) \hspace{0.1cm} \text{mod } 2^{m}) }+\alpha_{i+1}-1}
\end{split}
\end{equation}
\rm{and}
\begin{equation}
\int P(X|p,m)P(p|m) dp = \prod_{i=0}^{2^{m}-1} \frac{\Gamma(n_{i}^{((2i+1) \hspace{0.1cm} \text{mod } 2^{m}) - 1} + \beta_{i+1})\Gamma(n_{i}^{((2i+1) \hspace{0.1cm} \text{mod } 2^{m})}) + \alpha_{i+1})}{\Gamma(n_{i}^{((2i+1) \hspace{0.1cm} \text{mod } 2^{m}) - 1} + n_{i}^{((2i+1) \hspace{0.1cm} \text{mod } 2^{m})} +\beta_{i+1}+\alpha_{i+1})}
\end{equation}

\begin{proof}
The equation:
\begin{equation}
\begin{split}
P(p | X) & = \frac{\mathcal{L}(X|p,m) P(p|m)}{P(X)} \\
& = \frac{\mathcal{L}(X|p,m)P(p|m)}{\int \mathcal{L}(X|p,m)P(p|m) dp}
\end{split}
\end{equation}
is taken from Bayes' theorem where the likelihood is given in Theorem 4.2 and the prior is defined to be a product of beta densities: 
\begin{equation}
P(p|m) = \prod_{i=0}^{2^{m}-1} \frac{\Gamma(\alpha_{i+1} + \beta_{i+1})}{\Gamma(\alpha_{i+1})\Gamma(\beta_{i+1})}p^{\alpha_{i+1} -1}(1-p)^{\beta_{i+1} -1},
\end{equation}
for unknown parameters $\alpha$ and $\beta$. Substituting this in gives:
\begin{equation}
\begin{split}
P(p | X) & \propto \mathcal{L}(X|p,m)P(p|m) \\
& = \prod_{i=0}^{2^{m}-1} \Bigg[ \left( 1 - p_{i}^{((2i+1) \hspace{0.1cm} \text{mod } 2^{m})} \right)^{n_{i}^{((2i+1) \hspace{0.1cm} \text{mod } 2^{m}) - 1}} \left( p_{i}^{((2i+1) \hspace{0.1cm} \text{mod } 2^{m})} \right) ^{n_{i}^{((2i+1) \hspace{0.1cm} \text{mod } 2^{m})}} \\
& \hspace{2cm} \times \left( 1 - p_{i}^{((2i+1) \hspace{0.1cm} \text{mod } 2^{m})} \right)^{\beta_{i+1}-1} \left( p_{i}^{((2i+1) \hspace{0.1cm} \text{mod } 2^{m})} \right) ^{\alpha_{i+1}-1} \Bigg] \\
& = \prod_{i=0}^{2^{m}-1} (1 - p_{i}^{((2i+1) \hspace{0.1cm} \text{mod } 2^{m})})^{n_{i}^{((2i+1) \hspace{0.1cm} \text{mod } 2^{m}) - 1}+\beta_{i+1}-1} \\
& \hspace{2cm} \times (p_{i}^{((2i+1) \hspace{0.1cm} \text{mod } 2^{m})})^{n_{i}^{((2i+1) \hspace{0.1cm} \text{mod } 2^{m}) }+\alpha_{i+1}-1}.
\end{split}
\end{equation}

Both the prior, $P(p|m)$, and the posterior, $P(p|X)$, take the form of a product of beta densities, hence there is a conjugate relationship and the following is true:
\begin{equation}
\int P(S|p,m)P(p|m) dp = \prod_{i=0}^{2^{m}-1} \frac{\Gamma(n_{i}^{((2i+1) \hspace{0.1cm} \text{mod } 2^{m}) - 1} + \beta_{i+1})\Gamma(n_{i}^{((2i+1) \hspace{0.1cm} \text{mod } 2^{m})}) + \alpha_{i+1})}{\Gamma(n_{i}^{((2i+1) \hspace{0.1cm} \text{mod } 2^{m}) - 1} + n_{i}^{((2i+1) \hspace{0.1cm} \text{mod } 2^{m})} +\beta_{i+1}+\alpha_{i+1})}.
\end{equation}
\end{proof}

\end{document}